\documentclass[a4paper,12pt]{article}
\usepackage{graphicx}
\usepackage{multirow}
\usepackage{bbm}
\usepackage{float}
\usepackage{footnote}
\usepackage{amssymb}
\usepackage{pifont}
\usepackage{ifpdf}
\ifpdf
\else
\usepackage{pstcol,pst-fill,pstricks}
\fi
\usepackage{natbib}
\setlength{\bibsep}{0pt plus 0.05ex}
\usepackage{mathpazo}
\usepackage{import}
\usepackage{dsfont}
\usepackage{enumerate}
\usepackage{a4}
\usepackage{lscape}
\usepackage{epsfig}
\usepackage[latin1]{inputenc}
\usepackage[OT1]{fontenc}
\usepackage[T1]{fontenc}
\usepackage{color}
\usepackage[hyperfootnotes=false]{hyperref}
 \hypersetup{colorlinks=true,
 linkcolor=blue,
 citecolor=blue,
 filecolor=black,
 urlcolor=black,
 pdfstartview={Fit},
pdfpagemode=UseNone
 }
\usepackage{geometry}
\geometry{hmargin=1.25in, vmargin=1.25in}
\usepackage{mdframed}
\usepackage{tikz}
\usepackage[multiple]{footmisc}

\makeatletter
\newcommand*\bigcdot{\mathpalette\bigcdot@{.5}}
\newcommand*\bigcdot@[2]{\mathbin{\vcenter{\hbox{\scalebox{#2}{$\m@th#1\bullet$}}}}}
\makeatother

\usepackage{setspace}
 \newcommand{\be}{\begin{equation}}
\newcommand{\ee}{\end{equation}}
\newcommand{\E}{\mathbb{E}}
\usepackage{mathtools}
\usepackage{amsmath}
\usepackage{amsthm}
\usepackage{thmtools}
\usepackage{calrsfs}
\DeclareMathAlphabet{\pazocal}{OMS}{zplm}{m}{n}
\allowdisplaybreaks

\declaretheoremstyle[
  headfont=\normalfont\scshape,
  numbered=unless unique,
  bodyfont=\normalfont,
  spaceabove=1em,
  prefoothook=\newline\rule{\linewidth}{1pt},
  spacebelow=1em,
]{exmpstyle}

\declaretheorem[
  style=exmpstyle,
  title=\textbf{Definition},
  refname={example,examples},
  Refname={Example,Examples}
]{ddef}

\DeclareMathOperator{\Tr}{Tr}
\newtheorem{Ass}{Assumption}
\newtheorem{prop}{Proposition}

\theoremstyle{definition}

\newtheorem{theorem}{Theorem}
\newtheorem{corollary}{Corollary}[theorem]

\linespread{1.5}

\begin{document}
\begin{titlepage}

\title{Analyzing Linear DSGE models: the Method of Undetermined Markov States\thanks{\scriptsize{I thank He Nie and Zheng Zhongxi for excellent research assistance. I thank Alexis Akira-Toda, Adrien Bilal, Giovanni Caggiano, Fabrice Collard, Gregory Cox, Thanasis Geromichalos, Edward Herbst, Alisdair McKay and Denis Tkachenko for useful comments/discussions as well as seminar participants at Monash University, National University of Singapore, the 2021 $\&$ 2022 Computing in Economics and Finance (CEF) and International Association for Applied Econometrics (IAAE) conferences as well as the Dynare 2022 conference. I also want to thank researchers outside of the fields of economics that where kind enough to provide comments and clarifications on the results presented in this paper. That includes Jon Chapman, Radek Erban, Joao Hespanha and Alejandro Ribeiro. Finally, I thank an anonymous referee for interesting suggestions that greatly improved the paper.}}}
\author{Jordan Roulleau-Pasdeloup\thanks{\scriptsize{Department of Economics, National University of Singapore
1 Arts Link, AS2 \# 05-22 - Singapore 117570. Contact: jordan.roulleau@gmail.com}}}
\date{\vspace{-.25cm}\today\vspace{-1.2cm}}

\maketitle
\thispagestyle{empty}
\abstract{I show that a class of Linear DSGE models with one endogenous
state variable can be represented as a three-state Markov chain. I
develop a new analytical solution method based on this representation, which
amounts to solving for a vector of Markov states and one transition
probability. These two objects constitute sufficient statistics to compute in closed form objects that have routinely been computed numerically: impulse
response function, cumulative sum, present discount value multiplier. I apply the method to a standard New Keynesian model that features optimal monetary policy with commitment.}\\[.5cm]
\noindent{\bfseries JEL Codes: E3;C62;C68;D84} \\
\noindent{\bfseries Keywords: Linear DSGE models; Markov chains; Impulse Response Functions} \\
\end{titlepage}
\setlength{\parskip}{1em}

\section{Introduction}

Most macroeconomic models feature (rational) expectations as well as endogenous persistence. In these models, the effects of exogenous shocks propagate through their internal transmission mechanisms. These can be contrasted with models that feature no endogenous persistence, which inherit the persistence of exogenous shocks.\footnote{The standard log-linear New Keynesian model described in \cite{Gali2015monetary} fits into this category.} The flip side of the coin is that models without endogenous persistence can be solved in closed form and easily represented graphically. In this context, adding endogenous persistence comes at a price: while it makes the model more interesting, the model is usually not amenable anymore to a closed-form solution or a simple graphical representation. 

Given this difficulty, the usual procedure is to start from a special version of the model without endogenous persistence to present the intuition and then move on to the model with endogenous persistence. The latter is then studied through numerical results.\footnote{See \cite{Christiano2011} for a New Keynesian model applied to fiscal policy at the ZLB. See also \cite{Sims2019four} for a recent contribution that looks at the interaction of heterogeneous households, banks and quantitative easing.} There are two main drawbacks associated with this procedure. The first is that one can never be sure that the simple model is entirely faithful to the more elaborated one: endogenous persistence can introduce entirely new transmission mechanisms. Second, several issues can only be studied in models with endogenous persistence.\footnote{For example, durable goods (\cite{Barsky2007sticky}), public debt (\cite{Bianchi2017escaping}), search and matching frictions on the labor market (\cite{Michaillat2012matching}), firm dynamics (\cite{Bilbiie2012endogenous}), investment (\cite{Mcgrattan2012capital}) and so and so on.} In these cases, looking at the model without endogenous persistence is generally not an option.

In this paper, I develop a set of tools to help overcome this trade-off. These tools aim to facilitate the interpretation of the effects of shocks and policies in linear DSGE models using closed form results. The main difficulty in a wide class of linear DSGE models comes from the interplay of expectations and endogenous persistence. Say one has a linear model with output $y_t$ and an endogenous state variable $k_t$ whose law of motion explicitly depends on $y_t$ \textemdash the standard RBC model where $k_t$ is capital is one example. This model also features an exogenous process $z_t$ that follows a standard $AR(1)$ process. What makes models without endogenous persistence amenable to closed form solutions is that one can compute the conditional expectation of output next period as a linear function of current output: there exists a constant $c$ such that $\E_t y_{t+1} = c\cdot y_t$. Likewise, what makes models \textit{with} endogenous persistence complicated is that it is not possible to express this conditional expectation as a function of $y_t$ anymore.

The tools that I develop in this paper allow one to compute expectations in a more intuitive way in this context. This is achieved by representing the dynamics of $y_t$ using a finite state, forward looking Markov chain $\pazocal{Y}_t$. For the purpose of this paper, this Markov chain will effectively take on two states $y_I$ and $y_M$ where 'I' stands for impact and 'M' for medium run.\footnote{Strictly speaking, the chain will visit a third state which is the absorbing steady state. Since I am focusing on models expressed in log deviations from steady state this third state will be equal to zero.} The Markov chain structure implies that one only has to care about these two states going forward so that expectations can be computed in a straightforward manner. Loosely speaking, the method guarantees that there exists constants $c_1$ and $c_2$ such that $\E_I \pazocal{Y}_{t+1} = c_1\cdot y_I$ and $\E_M \pazocal{Y}_{t+1} = c_2\cdot y_M$, where $E_I$ denotes expectations conditional on being in state 'I' and likewise for state 'M'. In the special case of no endogenous persistence $c_2$ is irrelevant and $c_1$ boils down to the constant $c$ introduced before. For this reason, the Markov structure allows one to consider endogenous persistence while retaining the analytical tractability and nesting the case without endogenous persistence.

More generally, I show that a class of linear DSGE models with one endogenous state variable have a common structure which can be \emph{exactly} replicated using a simple, three-state Markov chain.\footnote{Models without endogenous persistence that assume two-state Markov chains for exogenous shocks arise naturally as a special case of the approach developed in this paper. See for example \textemdash \cite{Eggertsson2010}, \cite{Christiano2011} and \cite{Woodford2011}.} Leveraging the structure of the solution, I develop a new method to solve these models in closed form that consists in solving for the relevant states and transition probabilities of the Markov chain: the Method of Undetermined Markov States. I show that the method is straightforward to implement in practice. The key result underpinning the method is that a three-state Markov chain can replicate the hump-shaped behavior typical of endogenous states in DSGE models. In Online Appendix B, I show under which conditions the method generalizes to an arbitrary number of endogenous states. As long as the transition matrix in the state space representation of the model does not feature any imaginary eigenvalues, then the method still applies. For this reason, I will focus on the equilibrium concept developed in \cite{Mccallum1983non}, which explicitly rules these out.\footnote{See \cite{Mccallum1983non}, footnote 9.}

This might sound counter-intuitive at first since each run of such a Markov chain is a step function. However, by taking the average over arbitrary many realizations of the Markov chain at each point in time\textemdash which is equivalent to computing the conditional expectation, I show  that one can get an exact match. Using this method, it can be shown in a straightforward way that a two-state Markov chain can replicate the impulse response of a typical $AR(1)$ process: maximum effect on impact and then exponential decay to 0. To generate a hump-shape, one just needs an extra state squeezed in-between these two, which will enable the three-state Markov chain to generate a hump-shape \textit{on average}. One can then think of the impulse response of a linear DSGE model as a Markov chain: the initial distribution is the initial shock/impact, while the stationary distribution of the (absorbing) chain is the steady state. Characterising how this Markov chain moves from its initial to its stationary distribution is the main focus of this paper.

The method developed in this paper guarantees that knowledge of a few states and transition probabilities enables one to access virtually all there is to know about the underlying model: these are sufficient statistics. Let us go back to the generic model described earlier and assume that the exogenous process $z_t$ has persistence $p$. The Method of Undetermined Markov States will yield a short run state $y_I$ and a medium run state $y_M$ for output, as well as the endogenous persistence $q$ of the medium run state.
In this context, I show that the impulse response, the cumulative effect on output or the present discount value multiplier can all be written in closed forms as simple functions of $q,y_I,y_M$, and $p$ \textemdash which is given. Given that expectations can be computed as linear functions of the controls, the method can be used to plot the exact first order conditions of the model graphically with endogenous expectations explicitly taken into account. 

To illustrate the method, I develop a closed form solution of a standard New Keynesian model where the Central Bank behaves optimally with full commitment. I show that the model can be represented graphically by a pair of aggregate supply/targeting rule relations: one for the short run and one for the medium run. 
For the short run, I show that the logic described in \cite{Mcleay2020optimal} operates: cost-push shocks make the Phillips curve slide along a downward sloping targeting rule. In the medium run however, the story is different. The presence of commitment appears as a shift to the targeting rule so that it now slides along a positive sloping Phillips curve. As a result, the model clarifies that the positive correlation between inflation and the output gap re-appears in the medium run after being masked in the short run. 

This paper is structured as follows. I show that a class of LRE models can be solved using a three-state Markov chain representation in Section \ref{sec:baseline_model}. I briefly review related approaches in Section \ref{sec:related}. In Section \ref{sec:analytical_tools}, I show that relevant model objects can be computed analytically as a function of a few Markov states/transition probabilities. In Section \ref{sec:application}, I use the method to solve a standard New Keynesian model with optimal monetary policy. I conclude in Section \ref{sec:conclusion}.

\section{The Method of Undetermined Markov States}
\label{sec:baseline_model}

In this section, I will work with a class of generic linear DSGE models and describe a new method that allows one to simply compute their impulse response function (henceforth IRF) in closed form. The reason for the focus on the IRF is that, for a linear model the IRF summarizes all there is to know about it. Indeed, a standard result in \textit{Linear Systems Theory} is that any path of the endogenous variables can be expressed as a linear combination of the IRFs at each point in time.\footnote{See for example Theorem 3.2 in \cite{Hespanha2018linear}.} For simplicity of exposition, I will consider a class of models that feature an arbitrary number of control/forward-looking variables, one endogenous state and one exogenous state \textemdash the generalization to an arbitrary number of endogenous states is covered in the online Appendix B. More precisely, throughout this section I will work with the following class of Linear DSGE models:

\begin{ddef}[Model $\mathcal{A}$]
\label{def:model_AM}
I consider a class of models\footnote{This class of models can be easily extended to include $\E_t k_{t+1}$ in equation \eqref{eq:fwd_multi} or $\mathbb{Y}_{t-1}$ on the right hand side of equation \eqref{eq:bwd_multi}.} that relate a vector of control/forward-looking variables $\mathbb{Y}_t$ to an endogenous state $k_t$ and an exogenous state $z_t$ through the following set of equations:
\begin{align}
\label{eq:fwd_multi}
\mathbf{A}_0\mathbb{Y}_t &= \mathbf{A}_1\E_t \mathbb{Y}_{t+1}+B_0 k_t+C_0 z_t\\
\label{eq:bwd_multi}
k_t &= \rho k_{t-1} + D_0 \mathbb{Y}_t+ez_t\\
z_t &= pz_{t-1}+\epsilon_t,
\label{eq:exo_state_multi}
\end{align}
where $t\in \mathbb{N}$, $\epsilon_t\sim i.i.d\ \pazocal{N}(0,\sigma)$, $k_{-1}=z_{-1}=0$ are given and $\E_t \equiv\E(\cdot|\Omega_t)$ where $\Omega_t$ denotes the information set at time $t$. The endogenous and exogenous states $k_t$ and $z_t$ are scalars while $\mathbb{Y}_t = \left[y_{1,t}\cdots\ y_{N,t}\right]^\top$ for $N\in\mathbb{N}_+$, $N\geq 1$ and all vectors/matrices are conformable.
\end{ddef}
This encompasses a wide array of models where the endogenous state can react on impact. A prime example would be the canonical RBC model, which is studied in detail in Online Appendix A. This would also be the case in a model where monetary policy is set optimally and $k_t$ represents the Lagrange multiplier on the Phillips curve constraint. I study this model in Section \ref{sec:application}.

The class of models considered here is labelled $\mathcal{A}$ because both the endogenous and exogenous states are auto-regressive. I restrict the parameter for exogenous persistence so that $0\leq p<1$. I do not impose any restrictions on the parameter $\rho$ that governs endogenous persistence, but models used in macroeconomics are typically characterized by $0\leq\rho\leq 1$. A typical solution for model $\mathcal{A}$ takes the following form:
\begin{ddef}[Model $\mathcal{A}$: Solution]
\label{def:Solution_A}
A Minimum State Variable (MSV) solution of model $\mathcal{A}$ can be written in state-space form as:
\begin{align}
\label{eq:state_space_k}
k_t &= \eta_{kk}k_{t-1}+\eta_{kz}z_{t}\\
y_{i,t} &= \eta^{(i)}_{yk}k_{t-1}+\eta^{(i)}_{yz}z_{t},
\label{eq:state_space_y}
\end{align}
for $i \in \left\{1,\dots, N\right\}$. The first equation is the state equation and the second is the measurement equation. 
\end{ddef}
To ensure that such a solution exists, I will maintain the following assumption throughout this paper:
 \begin{Ass}
\label{ass:BK}
Matrices $\mathbf{A}_0,\mathbf{A}_1$, vectors $B_0, B_1, D_0, D_1$ and the scalar $\rho$ are such that Model $\mathcal{A}$ meets the conditions for a unique MSV equilibrium in \cite{Mccallum1983non}.
\end{Ass}

There are several reasons for focusing on the unique MSV equilibrium. First, such an equilibrium has been shown to be E-stable in \cite{Mccallum2003unique}. Second, any model belonging to class $\mathcal{A}$ that satisfies the \cite{Blanchard1980solution} conditions has a solution that coincides with the MSV.\footnote{This may not be the case however if there are more than one endogenous state variable. See discussion in the Online Appendix B.} In some cases, if the \cite{Blanchard1980solution} conditions are not met so that the equilibrium is indeterminate, one can still construct a unique MSV solution. As a result, the MSV solution concept is more general. Third, recent contributions such as \cite{Angeletos2021determinacy} show that the presence of indeterminacy may be an artifact of the assumption of perfect transmission of knowledge: if the latter is only slightly imperfect, then only the MSV solution emerges. 

With this in mind, the main contribution of this paper is to recast the dynamics of both $y_{i,t}$ and $k_t$ in terms of a stochastic process that is explicitly forward looking: a finite-state Markov chain.    
 
\subsection{A Markov chain version of model $\mathcal{A}$}
\label{sec:MC_A}

The goal of this section is to detail how one can recast model $\mathcal{A}$ in terms of Markov states and transition probabilities. Remember that computing one period ahead conditional expectations in this context involves computing one for state `I' and one for state `M'. Likewise, the forward and backward equations \eqref{eq:fwd_multi}-\eqref{eq:bwd_multi} will be broken down in two versions for each state of the Markov chain. 

The aim is to have Markov versions of these equations that are equivalent. These new, Markovian equations will in turn imply various relationships between Markov states and transition probabilities. A solution of the model will then be a vector of states and transition probabilities that make all these new equations hold with equality. The specific form that these equations must take is described in the following definition and in more detail in the subsequent discussion.
\begin{ddef}[Model $\mathcal{M}$]
\label{def:model_M_multi}
Let me define a vector-valued Markov chain $\pazocal{M}_t$ on a countable set $\mathbb{S}_M$ with the banded transition matrix:
\begin{align*}
\mathcal{P}_b=
\begin{bmatrix}
p & 1-p & 0\\
0 & q & 1-q\\
0 & 0 & 1
\end{bmatrix}.
\end{align*}
The Markov chain takes on three vectors:
\begin{align*}
\pazocal{M}_I &= \left[1\ k_I\ y_{1,I}\dots\ y_{N,I}\right]^\top\\
\pazocal{M}_M &= \left[0\ k_M\ y_{1,M}\dots\ y_{N,M}\right]^\top\\
\pazocal{M}_L &= \left[0\dots\ 0\right]^\top,
\end{align*}
where $z_I$ has been normalized to 1 and all three vectors are of size $(N+2)\times 1$. The first one is the impact vector. The second one is the medium run vector and the third one is the steady state. Each Markov chain has a degenerate initial distribution such that they all start in the first state. A solution of model $\mathcal{M}$ is a pair of vectors $\pazocal{Y}_I=\left[ y_{1,I}\dots\ y_{N,I}\right]^\top$ as well as $\pazocal{Y}_M=\left[y_{1,M}\dots\ y_{N,M}\right]^\top$ along with scalars $q,k_I$ and $k_M$ that solve the following system of non-linear equations:
\begin{align}
\label{eq:fwd_I_multi}
\mathbf{A}_0\pazocal{Y}_I &= \mathbf{A}_1p\pazocal{Y}_I+(1-p)\mathbf{A}\pazocal{Y}_M + B_0k_I+C_0\\
\label{eq:fwd_M_multi}
\mathbf{A}_0\pazocal{Y}_M &= \mathbf{A}_1q\pazocal{Y}_M +B_0k_M\\
\label{eq:bwd_I_multi}
k_I &= D_0\pazocal{Y}_I+e\\
\label{eq:bwd_M_multi}
k_M &= \frac{\rho}{1-p} k_I + D_0\pazocal{Y}_M\\
\label{eq:kMkI_multi}
k_M &= \frac{qk_I}{1-p}
\end{align}
and is such that $0\leq q<1$. This ensures that the endogenous and exogenous states are perfectly correlated: as soon as the Markov chain jumps from $\pazocal{M}_I$ to $\pazocal{M}_M$, the exogenous state jumps from 1 to 0, the endogenous state jumps from $k_I$ to $k_M$ and each control variable $i$ jumps from $y_{i,I}$ to $y_{i,M}$. Each realization of length $h$ for this Markov chain will yield a $(N+2)\times h$ vector. The first line of this vector will essentially replicate the path for the exogenous state from model $\mathcal{A}$, the second one will replicate the path for the endogenous state and the remaining $N$ will replicate the control variables. As a result, $\mathcal{M}$ implies $N+2$ nested Markov chains: $\pazocal{Z}_t$ for the exogenous state, $\pazocal{K}_t$ for the endogenous state and $\pazocal{Y}_{i,t}$ for each control variable, all with transition matrix $\mathcal{P}_b$. The impulse response for the Markov chain $\pazocal{Y}_{i,t}$ is defined as the sequence $\left\{\E_t \pazocal{Y}_{i,t+n}\right\}_{n\geq 0}$ and likewise for the other two. 
\end{ddef}

Let me now discuss equations \eqref{eq:fwd_I_multi}-\eqref{eq:kMkI_multi} sequentially. The intuition behind equations \eqref{eq:fwd_I_multi}-\eqref{eq:fwd_M_multi} encapsulates why the current Markovian framework will prove to be very simple and amenable to analytical expressions. It uses the fact that one step ahead conditional expectations can be computed in a simple way to rewrite the forward equation \eqref{eq:fwd_multi} in both states `I' and `M'. Take the version for state `I' for example, equation \eqref{eq:fwd_I_multi}. Instead of involving past terms, the forward equation now boils down to a simple relationship between $\pazocal{Y}_I$ and $\pazocal{Y}_M$ for given endogenous/exogenous states. Equation \eqref{eq:fwd_M_multi} for state `M' is even simpler: this is a ``static'' relationship that only involves $\pazocal{Y}_M$ and the endogenous state $k_M$. These two equations are essentially the initial value restrictions which ensure that the Markov chain replicates the forward equation for the first two periods. 

Equations \eqref{eq:bwd_I_multi}-\eqref{eq:bwd_M_multi} follow the same logic. Given that the economy starts from the steady state, the initial variation of the endogenous state can only come from the vector of forward looking variables $\pazocal{Y}_I$. Likewise, equation \eqref{eq:bwd_M_multi} states that $\E_t \pazocal{K}_{t+1} = \rho\E_t\pazocal{K}_{t} + D_0\E_t\pazocal{Y}_{t+1} + e\E_t \pazocal{Z}_{t+1}$ and once again uses the Markov structure to compute the expectations of the Markov chain. It uses \eqref{eq:bwd_I_multi} to cancel out terms involving $\pazocal{Y}_I$ and $e$.

Rewriting the forward- and backward-looking equations of the auto-regressive model thus gives $2N+2$ out of $2N+3$ restrictions so that another one is needed. To derive the last restriction, I am going to use the fact that the $AR(2)$ process for $k_{t+n}$ from the state space solution will imply a precise relationship between $k_M$ and $k_I$. Intuitively, to replicate the hump-shape behavior of the endogenous state, the second state has to be equal to the first one by a factor of $q/(1-p)$. 

The main idea behind the method presented here is that one can break down the dynamics of the economy after a shock into a short and a medium run regime. In turn one can study them one by one. Then, for a given $q$, the actual equilibrium of this economy will be a time-varying linear combination of these two regimes. 

The fact that these Markov chains actually replicate the dynamics of the underlying linear DSGE model comes from focusing on the conditional expectation of each Markov chain. More specifically, the Markov chain structure will imply the following dynamics for these conditional expectations: 
\begin{prop}
\label{prop:p_star}
Let $\pazocal{Y}_{i,t}$ denote a three-state Markov chain with initial distribution $[1,\ 0,\ 0 ]$ and state space $[y_{i,I},\ y_{i,M},\ 0 ]^\top$. Then, the transition matrix $\mathcal{P}_b$ from Definition \ref{def:model_M_multi} implies that the conditional expectation of the Markov chain is given by:
\begin{align*}
\E_t \pazocal{Y}_{i,t+n} &=
\begin{cases}
y_{i,I}\quad\hfill \text{if}\quad n=0\\
p\cdot y_{i,I} + p_{1,2} \cdot y_{i,M}\quad\hfill \text{if}\quad n=1\\
(p+q)\E_t \pazocal{Y}_{i,t+n-1}  - pq\E_t \pazocal{Y}_{i,t+n-2} \quad \hfill   
\text{if}\quad n\geq 2\\
\end{cases}
\end{align*}
\end{prop}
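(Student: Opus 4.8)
The plan is to compute the ensemble average $\langle \pazocal{Y}_{i,t+n} \rangle$ directly from the definition of the Markov chain, namely as $e_I^\top \mathcal{P}_b^{\,n} \, s$, where $e_I = [1,0,0]^\top$ encodes the degenerate initial distribution and $s = [y_{i,I},\, y_{i,M},\, 0]^\top$ is the state-space vector. The $n=0$ case is immediate since $\mathcal{P}_b^0 = I$, giving $e_I^\top s = y_{i,I}$. For $n=1$, one reads off the first row of $\mathcal{P}_b$, obtaining $p\cdot y_{i,I} + (1-p)\cdot y_{i,M} + 0$, which matches the claimed expression with $p_{1,2} = 1-p$. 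The substance of the proposition is the second-order linear recursion for $n \geq 2$, and for that I would exploit the fact that $\mathcal{P}_b$ is upper triangular with eigenvalues $p$, $q$, and $1$ on the diagonal.

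The cleanest route is the Cayley--Hamilton theorem. Since the distinct relevant dynamics live in the subspace spanned by the non-absorbing behavior, I would first note that the vector of interest $a_n := e_I^\top \mathcal{P}_b^{\,n}$ satisfies the same linear recurrence as the powers of $\mathcal{P}_b$ annihilated by the minimal polynomial. The characteristic polynomial of $\mathcal{P}_b$ is $(\lambda - p)(\lambda - q)(\lambda - 1)$; but because $e_I$ has no component that ever reaches mass proportional to the eigenvalue-$1$ direction in a way that contributes to $\langle \pazocal{Y}_{i,t+n}\rangle$ — more precisely, because the third state contributes $0$ to $s$ — the sequence $\langle \pazocal{Y}_{i,t+n} \rangle = a_n \cdot s$ is a linear combination of $p^n$ and $q^n$ only (the $1^n$ term is killed by the zero entry of $s$). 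Any sequence of the form $\alpha p^n + \beta q^n$ satisfies $x_n = (p+q)x_{n-1} - pq\, x_{n-2}$, which is exactly the claimed recursion. I would make this rigorous by writing $\langle \pazocal{Y}_{i,t+n} \rangle$ explicitly: tracking the probability of being in state $I$ after $n$ steps is $p^n$, and the probability of being in state $M$ after $n$ steps is $\sum_{j=0}^{n-1} p^{j} (1-p) q^{\,n-1-j}$, which is a combination of $p^n$ and $q^n$; multiplying by $y_{i,I}$ and $y_{i,M}$ respectively and summing gives a closed form in $p^n$ and $q^n$, and then verifying the recursion is a one-line substitution.

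Alternatively — and perhaps more in the spirit of the paper — I would avoid eigen-decomposition entirely and argue by a direct first-step (or last-step) decomposition: condition on the state at time $t+n-1$ versus $t+n-2$ and use the structure of $\mathcal{P}_b$ to derive $\langle \pazocal{Y}_{i,t+n}\rangle$ in terms of the two previous ensemble averages. Writing $P^{(n)}_{1,k}$ for the $(1,k)$ entry of $\mathcal{P}_b^{\,n}$, the banded structure gives the scalar recursions $P^{(n)}_{1,1} = p\, P^{(n-1)}_{1,1}$ and $P^{(n)}_{1,2} = (1-p)P^{(n-1)}_{1,1} + q\, P^{(n-1)}_{1,2}$; combining these to eliminate $P^{(n-1)}_{1,1}$ yields a second-order recursion for each entry with characteristic roots $p$ and $q$, hence the same recursion for their linear combination $\langle \pazocal{Y}_{i,t+n}\rangle = P^{(n)}_{1,1} y_{i,I} + P^{(n)}_{1,2} y_{i,M}$.

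The main obstacle is purely bookkeeping rather than conceptual: one must be careful that the eigenvalue $1$ of $\mathcal{P}_b$ does not contaminate the recursion. The resolution is precisely that the absorbing state carries value $0$ in $s$, so although $\mathcal{P}_b^{\,n}$ individually has a $1^n$ mode, the inner product with $s$ does not, and the recursion reduces from third to second order. I would state this explicitly as the key observation. A minor secondary point is checking the edge case $n=2$ against the $n=1$ and $n=0$ formulas to confirm consistency of the piecewise definition, which is a direct substitution.
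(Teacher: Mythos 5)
Your proposal is correct and follows essentially the same route as the paper's own proof: the paper likewise invokes the Cayley--Hamilton theorem, restricting attention to the transient $2\times 2$ block $\begin{bmatrix} p & 1-p\\ 0 & q\end{bmatrix}$ (trace $p+q$, determinant $pq$) precisely because the absorbing state carries value $0$, which is your key observation that the unit eigenvalue does not contaminate the recursion. Your second, elimination-based derivation of the second-order recursion for the first-row entries of $\mathcal{P}_b^{\,n}$ is a nice uniform variant that also covers the degenerate case $p=q$, where the ``linear combination of $p^n$ and $q^n$'' phrasing would otherwise need a $np^{n-1}$ term.
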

\begin{proof}
See Appendix \ref{sec:proof_prop_p_star}.
\end{proof}
Proposition \ref{prop:p_star} can be viewed as a discrete-time version of a special case of \cite{Kurtz1970solutions}'s theorem, which states that a sequence of pure jump Markov processes converges to a deterministic process. While his proof is based on the convergence of operator semigroups, I provide a much simpler proof based on the Cayley-Hamilton theorem in Appendix \ref{sec:proof_prop_p_star}. 

With this in mind, what Proposition \ref{prop:p_star} says is that after the first two periods the conditional expectation of this Markov chain follows a deterministic process. More precisely, it follows an auto-regressive process of order 2. This helps to explain why the method will replicate the path of endogenous variables obtained with a conventional state-space solution method: with one endogenous state variable and an $AR(1)$ shock, each variable $y_{i,t}$ will follow an $ARMA(2,1)$ process. Viewed through these lenses, the moving average terms will correspond to the first two conditional expectations of the Markov chains for $n=0,1$. Note that these dynamics also hold for the endogenous states as well as the exogenous states.

The result outlined in Proposition \ref{prop:p_star} states that, for a given $q$ the Markov chains will follow these auto-regressive dynamics. What now remains to be shown is that the value of $q$ that obtains from the Markov restrictions \eqref{eq:fwd_I_multi}-\eqref{eq:kMkI_multi} is actually the one that obtains from the procedure outlined in \cite{Mccallum1983non}. This is detailed in the following Theorem:
\begin{theorem}
\label{thm:IRF_univariate}
Assume that Assumption \ref{ass:BK} holds, that the vector $\left[q,k_I,k_M,\pazocal{Y}_I^\top,\pazocal{Y}_M^\top\right]$ solves model $\mathcal{M}$ according to Definition \ref{def:model_M_multi} and that the matrix $\mathbf{M}(z)\equiv\mathbf{A}_0 - z\mathbf{A}_1$ is invertible for $z\in\mathbb{C}$. Then $q$ solves the exact same equation as $\eta_{kk}$ from the state-space solution and the one corresponding to the unique MSV solution can be picked according to the procedure outlined in \cite{Mccallum1983non}. It follows that, after a unit shock at time $t$:
\begin{align}
z_{t+n} = \E_t\pazocal{Z}_{t+n}\\
k_{t+n} = \E_t\pazocal{K}_{t+n}\\
y_{i,t+n} = \E_t\pazocal{Y}_{i,t+n}
\end{align}
for all $i\in\left\{1,\dots,N\right\}$ and $n\geq 0$. 
\end{theorem}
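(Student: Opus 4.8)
The plan is to show that the undetermined–coefficient equations that \cite{Mccallum1983non}'s procedure attaches to the state–space representation of Definition \ref{def:Solution_A} coincide, one for one, with the equations \eqref{eq:fwd_I_multi}--\eqref{eq:kMkI_multi} defining model $\mathcal{M}$, and then to read off the ensemble averages of the nested Markov chains from Proposition \ref{prop:p_star}.

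First, substitute the MSV ansatz $\mathbb{Y}_t=\eta_{yk}k_{t-1}+\eta_{yz}z_t$ and $k_t=\eta_{kk}k_{t-1}+\eta_{kz}z_t$ (with $\eta_{yk},\eta_{yz}$ the vectors stacking the scalar coefficients of Definition \ref{def:Solution_A}) into the compact system \eqref{eq:fwd_multi_I1}--\eqref{eq:exo_state_multi_I1}, using $\E_t\mathbb{Y}_{t+1}=\eta_{yk}k_t+p\,\eta_{yz}z_t$. Matching the coefficients on $k_{t-1}$ in the forward block gives $\mathbf{M}(\eta_{kk})\eta_{yk}=\eta_{kk}B$, matching those on $z_t$ gives $\mathbf{M}(p)\eta_{yz}=(\mathbf{A}\eta_{yk}+B)\eta_{kz}+C$, and the backward block gives $\eta_{kk}=\rho+D_0\eta_{yk}$ and $\eta_{kz}=D_0\eta_{yz}+e$. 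Eliminating $\eta_{yk}$ from the first and third of these relations produces the characteristic equation
\[
\eta_{kk}\bigl(1-D_0\mathbf{M}(\eta_{kk})^{-1}B\bigr)=\rho .
\]
Next do the analogous elimination inside model $\mathcal{M}$: \eqref{eq:fwd_M_multi} gives $\pazocal{Y}_M=k_M\mathbf{M}(q)^{-1}B$, \eqref{eq:kMkI_multi} gives $(1-p)k_M=qk_I$, and inserting both into \eqref{eq:bwd_M_multi} eliminates $k_I$ and $\pazocal{Y}_M$ and, after cancelling $k_M$, leaves $q\bigl(1-D_0\mathbf{M}(q)^{-1}B\bigr)=\rho$ — the same polynomial equation. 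Under Assumption \ref{ass:BK}, \cite{Mccallum1983non}'s rule selects a unique admissible root for $\eta_{kk}$, and since $q$ solves the same equation with $0\le q<1$, the two coincide, $q=\eta_{kk}$.

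With $q=\eta_{kk}$ in hand, verify that the tuple $\bigl(k_I,k_M,\pazocal{Y}_I,\pazocal{Y}_M\bigr)=\bigl(\eta_{kz},\,\tfrac{\eta_{kk}\eta_{kz}}{1-p},\,\eta_{yz},\,\tfrac{\eta_{kz}}{1-p}\eta_{yk}\bigr)$ satisfies all of \eqref{eq:fwd_I_multi}--\eqref{eq:bwd_M_multi}: equation \eqref{eq:bwd_I_multi} is exactly $\eta_{kz}=D_0\eta_{yz}+e$; \eqref{eq:bwd_M_multi} collapses to $\eta_{kk}=\rho+D_0\eta_{yk}$; \eqref{eq:fwd_M_multi} to $\mathbf{M}(\eta_{kk})\eta_{yk}=\eta_{kk}B$; and \eqref{eq:fwd_I_multi} to $\mathbf{M}(p)\eta_{yz}=(\mathbf{A}\eta_{yk}+B)\eta_{kz}+C$ — each established in the previous step. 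Conversely, for the given $q$ the system \eqref{eq:fwd_I_multi}--\eqref{eq:kMkI_multi} reduces to a linear system in $(\pazocal{Y}_I,k_I,k_M,\pazocal{Y}_M)$ whose governing matrix is $\mathbf{M}(p)-(\mathbf{A}\eta_{yk}+B)D_0$, the very matrix whose invertibility is part of the existence of a unique MSV equilibrium in Assumption \ref{ass:BK}; hence that linear system has a unique solution and it must be the tuple above. In particular $k_I=\eta_{kz}$, $\pazocal{Y}_I=\eta_{yz}$, $k_M=\eta_{kk}\eta_{kz}/(1-p)$ and $\pazocal{Y}_M=\eta_{kz}\eta_{yk}/(1-p)$.

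Finally, match the impulse responses via Proposition \ref{prop:p_star}. Starting from the steady state, a unit shock at $t$ gives $z_{t+n}=p^{\,n}$ and, solving the first–order recursion $k_{t+n}=\eta_{kk}k_{t+n-1}+\eta_{kz}p^{\,n}$ with $k_{t-1}=0$, writes $k_{t+n}$ — and therefore $y_{i,t+n}=\eta^{(i)}_{yk}k_{t+n-1}+\eta^{(i)}_{yz}p^{\,n}$ — as a fixed linear combination of $p^{\,n}$ and $\eta_{kk}^{\,n}$; consequently $z_{t+n}$, $k_{t+n}$ and $y_{i,t+n}$ each obey the order–two recursion with characteristic roots $p$ and $q=\eta_{kk}$, i.e. the $n\ge2$ branch of Proposition \ref{prop:p_star}. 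It then remains only to check the two seed values: at $n=0$, $y_{i,t}=\eta^{(i)}_{yz}=y_{i,I}$; at $n=1$, $y_{i,t+1}=\eta^{(i)}_{yk}\eta_{kz}+p\,\eta^{(i)}_{yz}$, which equals $p\,y_{i,I}+(1-p)y_{i,M}$ because $(1-p)y_{i,M}=\eta_{kz}\eta^{(i)}_{yk}$; the analogous $n=0,1$ checks for $k$ and $z$ are immediate. Equal recursion plus equal seeds give $y_{i,t+n}=\langle\pazocal{Y}_{i,t+n}\rangle=\E_t\pazocal{Y}_{i,t+n}$ for every $n\ge0$, and likewise $k_{t+n}=\E_t\pazocal{K}_{t+n}$ and $z_{t+n}=\E_t\pazocal{Z}_{t+n}$. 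I expect the root–selection step to be the main obstacle: arguing rigorously that the common characteristic equation, together with $0\le q<1$ and \cite{Mccallum1983non}'s selection under Assumption \ref{ass:BK}, forces $q=\eta_{kk}$ rather than another root, and dealing with the degenerate configurations ($q=0$, $k_M=0$, or $\eta_{kk}=p$) in which the eliminations divide by zero and have to be treated by a direct or limiting argument.
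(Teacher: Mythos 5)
Your proposal is correct, and its second half follows a genuinely different route from the paper. The first half coincides with the paper's argument: both derive the undetermined-coefficients equations from the ansatz, eliminate $\eta_{yk}$ (the paper's $M_k$) to obtain $\eta_{kk}=\rho+\eta_{kk}D_0\mathbf{M}(\eta_{kk})^{-1}B$, and show that eliminating $\pazocal{Y}_M$ and $k_I$ from the medium-run Markov restrictions yields the identical implicit equation for $q$ (the paper also divides by $k_M$, sharing the degeneracy you flag). Where you diverge is in establishing the IRF equalities. The paper never solves for the Markov states explicitly: it observes that the Markov restrictions give the forward equation at $n=0,1$, then uses the order-two recursion of Proposition \ref{prop:p_star} in an induction to show that $\E_t\pazocal{Y}_{i,t+n}$, $\E_t\pazocal{K}_{t+n}$, $\E_t\pazocal{Z}_{t+n}$ satisfy equations \eqref{eq:fwd_multi_I1}--\eqref{eq:exo_state_multi_I1} for all $n$, and finally appeals to uniqueness of the MSV equilibrium under Assumption \ref{ass:BK} to identify the two sequences. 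You instead identify the Markov states in closed form with the state-space coefficients, $k_I=\eta_{kz}$, $\pazocal{Y}_I=\eta_{yz}$, $k_M=\eta_{kk}\eta_{kz}/(1-p)$, $\pazocal{Y}_M=\eta_{kz}\eta_{yk}/(1-p)$, prove this tuple is the unique solution of the Markov system for the MSV root $q$, and then match the two paths as solutions of the same second-order recursion with the same $n=0,1$ seeds. Your route is more constructive \textemdash{} it delivers the states explicitly (essentially proving Corollary \ref{cor:IRF} as a by-product) and makes the uniqueness of the Markov solution for given $q$ explicit, which the paper leaves implicit \textemdash{} at the cost of the extra invertibility claim for $\mathbf{M}(p)-(\mathbf{A}\eta_{yk}+B)D_0$ (which you tie, somewhat informally but defensibly, to Assumption \ref{ass:BK}) and of handling the degenerate configurations $p=\eta_{kk}$, $k_M=0$ that you correctly flag; the paper's induction-plus-uniqueness argument sidesteps both but is less explicit about what the Markov states actually are.
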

\begin{proof}
See Appendix \ref{sec:Proof_thm_univariate}.
\end{proof}
Theorem \ref{thm:IRF_univariate} establishes that one can use the Markov chain representation from Definition \ref{def:model_M_multi} to solve for the impulse response of model $\mathcal{A}$. The gist of the proof amounts to showing that the sequences $\E_t\pazocal{K}_{t+n}, \E_t\pazocal{Z}_{t+n}$ and $\E_t\pazocal{Y}_{i,t+n}$ for $n\geq 0$ and $i\in\left\{1,\dots,N\right\}$ obey equations \eqref{eq:fwd_multi}-\eqref{eq:exo_state_multi}. A crucial step in this process is that using equations \eqref{eq:fwd_M_multi}, \eqref{eq:bwd_M_multi} and \eqref{eq:kMkI_multi} boil down to a polynomial in $q$ that is equivalent to the one that obtains with \cite{Uhlig1995toolkit}'s undetermined coefficients method. Next, one just needs to use \cite{Mccallum1983non} to select the root that corresponds to the MSV solution. Furthermore, under Assumption \ref{ass:BK}, this solution is unique. In Appendix \ref{sec:Proof_thm_univariate}, I use induction to show that if the Markov chains satisfy equations \eqref{eq:fwd_I_multi}-\eqref{eq:kMkI_multi}, then the conditional expectations of these Markov chains correspond to the MSV solution.

\subsection{A Graphical Illustration}

The fact that a Markov chain with transition matrix $\mathcal{P}_b$ can match the hump-shape dynamics of a given variable $y_{i,t}$ is illustrated in Figure \ref{fig:Visu}. In this figure, I simulate first the dynamics of $y_{i,t}$ after a unit shock $\epsilon_t=1$. In addition, I simulate 50000 runs (indexed by $j=1,\dots,J$) of the three-state Markov chain described in Proposition \ref{prop:p_star} and take the average at each period\textemdash  this is meant solely as a graphical illustration since the conditional expectation has been shown to be computable analytically. 

\begin{figure}
\centering
\caption{Finite Markov chains and ARMA(2,1)}
\includegraphics[width=.7\textwidth]{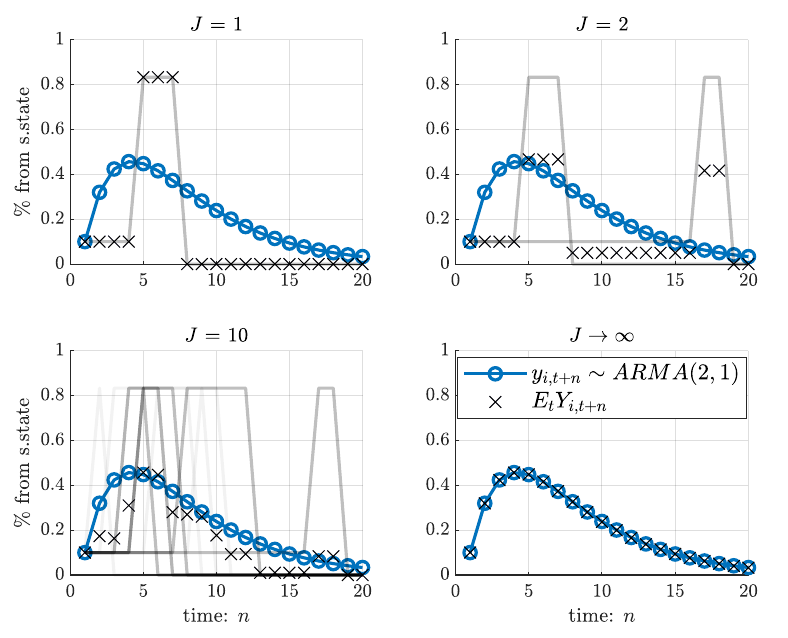}
\begin{minipage}{0.7\textwidth}
\protect\footnotesize Notes: \protect\scriptsize 
The blue circled line plots the deterministic path of a generic $ARMA(2,1)$ process after a unit shock $\epsilon_t=1$. The law of motion for the $ARMA(2,1)$ is given by $y_{i,t} = (p+\eta_{kk})y_{i,t-1} -p\eta_{kk}y_{i,t-2}+\eta^{(i)}_{yz}\epsilon_{t}+(\eta^{(i)}_{yk}\eta^0_{kz}-\eta^{(i)}_{yz}\eta_{kk})\epsilon_{t-1}$. For this particular figure, I use the following parameter values: $\eta_{kk}=0.8$, $\eta^0_{kz}=\eta^{(i)}_{yk}=0.5, \eta^{(i)}_{yz}=0.1$ and $p=0.7$. The exogenous shock is such that $\epsilon_t=1$ and $\epsilon_{t+n}=0$ for $n>0$.
\end{minipage}
\label{fig:Visu}
\end{figure}

For each subplot, I consider a series of realizations of the Markov chain. As before, I assume that the initial distribution is given by the vector $\left[1\ 0\ 0 \right]$ so that the latter starts from its first state. For $J=1$, there is only one realization and thus the average across Markov chain realizations will be the realization itself. One can see that for $J=1$, the specific draw generates a second state that is a bit less persistent compared to the first one. This even more true for the second realization: it only jumps to the second state after 16 periods and stays there for two periods before reaching the third state. This explains why the mean across these two realizations is shifted down in the short run. The lower left panel shows what happens after 10 realizations of the Markov chain. Repeating this process a large number of times, one can see from the bottom-right panel that the average across realizations will converge to the path of the $ARMA(2,1)$. As $J\to\infty$, the average across realizations converges to the conditional expectation of the Markov chain and exactly matches the conditional path of the underlying $ARMA(2,1)$ process. 

\section{Related Approaches}
\label{sec:related}

Given that it deals with linear DSGE models, the approach in the current paper is linked to and complementary with \cite{Blanchard1980solution}, \cite{Anderson1985linear}, \cite{Klein2000using} and \cite{Sims2001}. The closest approach to the one developed here is the method of undetermined coefficients developed in \cite{Uhlig1995toolkit}.

The focus on Markov chains of the current paper is also shared with papers cited earlier that have used these to deal with the Zero Lower Bound (ZLB): \cite{Eggertsson2003}, \cite{Eggertsson2010}, \cite{Christiano2011}, \cite{Mertens2014} and \cite{Nakata2019conservatism} among many others. This framework has permitted a detailed analysis of the characteristics of ZLB-type equilibria in New Keynesian models. In particular, this approach has highlighted that ZLB induced by structural shocks or self-fulfilling prophecies are fundamentally different in their policy implications\textemdash see \cite{Bilbiieneo2022}. These papers have virtually all considered models without endogenous state variables.\footnote{For example, in a recent contribution \cite{Ravn2020macroeconomic} consider a model with search and matching frictions on the labor market. To get rid of the employment rate as an endogenous state variable, they have to assume risk-neutral preferences. \cite{Bhattarai2015time} constitutes one exception as they solve a model with public debt using the Method of Undetermined Coefficients.} As a result, the framework developed in this paper allows one to go further and consider a much wider class of models, all while retaining tractability. I focus on the case of linear models in this paper and leave the model with occasionally binding constraints for future research. 

A series of recent papers have used an absorbing three-state Markov chain to think about monetary policy announcements at the ZLB: \cite{Eggertsson2006}, \cite{Carlstrom2015inflation}, \cite{Bilbiie2019} and \cite{Nie2022}. In contrast with the current paper however, these authors use a transition matrix that is more complicated in that it allows for transitions from state 1 to 3 without visiting state 2. Using the current transition matrix, one can ascertain that there is a finite time $T_M$ when the chain will reach state 2. The solution of the underlying model can then be described as one solution for $[t,t+T_M]$ and one solution for $[t+T_M+1,t+T_L]$ where $T_L$ is the finite date when the chain jumps from state 2 to 3. Beyond the simpler transition matrix, I mainly show that such a three-state Markov chain representation is much more general as it can be used to represent a wide class of linear DSGE models with endogenous persistence.

This paper is also close in spirit to papers that have studied solutions in sequence-space instead of state-space. These include \cite{Boppart2018exploiting}, \cite{Auclert2021using}, \cite{Mckay_Wolf} and \cite{Wolf2021interest}. Instead of solving for, say, the policy rule for aggregate output as a function of the states these papers work directly with the sequence for aggregate output over a finite but long time interval. My approach is similar in that the sequence of aggregate output can be summarized by a finite number of states and transition probabilities. As a result, solving for these Markov states and transition probabilities effectively amounts to solve directly for the sequence of aggregate output.

\section{Analytical tools for Linear DSGE models}
\label{sec:analytical_tools}

There is a long tradition of papers using DSGE models that adopt the following structure: present a simple version of the model analytically, then move on to analyze the full model. Since the full model most likely features endogenous state variables, the workhorse tool to analyze it is usually an impulse response \textemdash see \cite{Christiano2011} for a fairly recent example. In what follows, I will build on Theorem \ref{thm:IRF_univariate} and show that one can use the tools developed in the last section to compute several interesting model objects analytically as well. All of these results follow from Theorem \ref{thm:IRF_univariate} and thus can be considered as corollaries. For each corollary, I will discuss examples where the result can be useful in the context of linear DSGE models.

Since it is the main tool that has been used extensively, I begin with the impulse response function. The following corollary shows that the impulse response of model $\mathcal{M}$ that exactly replicates a model of class $\mathcal{A}$ can be characterized as follows:
\begin{corollary}[Impulse Response Function]
\label{cor:IRF}
The impulse response of the forward-looking variable $\pazocal{Y}_{i,t}$ and the backward-looking variable $\pazocal{K}_t$ at horizon $n$ after a shock $\epsilon_t=1$ are given by:
\begin{align*}
\E_t\pazocal{Y}_{i,t+n} &= p^n y_{i,I} + (1-p)\frac{q^n-p^n}{q-p}y_{i,M}\\
\E_t\pazocal{K}_{t+n} &= p^n k_I + (1-p)\frac{q^n-p^n}{q-p}k_M= \frac{q^{n+1}-p^{n+1}}{q-p}k_I.
\end{align*}
\end{corollary}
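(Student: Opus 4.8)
The plan is to read the closed forms off Proposition \ref{prop:p_star} and Theorem \ref{thm:IRF_univariate}, reducing the statement to solving a second-order linear recurrence with two prescribed initial terms. By Theorem \ref{thm:IRF_univariate}, after a unit shock the impulse response of model $\mathcal{A}$ coincides with the ensemble average of the corresponding Markov chain of model $\mathcal{M}$ (for both $\pazocal{Y}_{i,t}$ and $\pazocal{K}_t$), so it suffices to compute $\E_t\pazocal{Y}_{i,t+n}$ and $\E_t\pazocal{K}_{t+n}$, which equal $\langle\pazocal{Y}_{i,t+n}\rangle$ and $\langle\pazocal{K}_{t+n}\rangle$ since each chain starts from its first state with a degenerate initial distribution at time $t$.

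I would first treat $\pazocal{Y}_{i,t}$. Proposition \ref{prop:p_star} gives $\langle\pazocal{Y}_{i,t}\rangle=y_{i,I}$, $\langle\pazocal{Y}_{i,t+1}\rangle=p\,y_{i,I}+(1-p)\,y_{i,M}$ (using $p_{1,2}=1-p$ for the banded matrix $\mathcal{P}_b$), and $\langle\pazocal{Y}_{i,t+n}\rangle=(p+q)\langle\pazocal{Y}_{i,t+n-1}\rangle-pq\,\langle\pazocal{Y}_{i,t+n-2}\rangle$ for $n\geq2$. The characteristic polynomial of this recurrence is $x^2-(p+q)x+pq=(x-p)(x-q)$, so for $p\neq q$ every solution has the form $\langle\pazocal{Y}_{i,t+n}\rangle=\alpha p^n+\beta q^n$. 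Imposing the two initial values yields $\alpha+\beta=y_{i,I}$ and $\alpha p+\beta q=p\,y_{i,I}+(1-p)\,y_{i,M}$; subtracting $p$ times the first equation from the second gives $\beta=(1-p)\,y_{i,M}/(q-p)$ and hence $\alpha=y_{i,I}-(1-p)\,y_{i,M}/(q-p)$. Collecting terms delivers exactly $\E_t\pazocal{Y}_{i,t+n}=p^n y_{i,I}+(1-p)\frac{q^n-p^n}{q-p}y_{i,M}$. When $p=q$ the same argument with the repeated root gives $\E_t\pazocal{Y}_{i,t+n}=p^n y_{i,I}+(1-p)\,n\,p^{n-1}y_{i,M}$, which is the value of the stated formula in the limit $q\to p$.

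The identical computation applies verbatim to the nested chain $\pazocal{K}_t$, whose states are $k_I,k_M,0$ and whose transition matrix is again $\mathcal{P}_b$, and yields $\E_t\pazocal{K}_{t+n}=p^n k_I+(1-p)\frac{q^n-p^n}{q-p}k_M$. For the more compact expression I would then substitute the Markov restriction \eqref{eq:kMkI_multi}, i.e. $(1-p)k_M=q k_I$, and simplify: $p^n k_I+\frac{q^n-p^n}{q-p}\,q k_I=\frac{p^n(q-p)+q^{n+1}-q p^n}{q-p}k_I=\frac{q^{n+1}-p^{n+1}}{q-p}k_I$.

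The argument is essentially mechanical once Proposition \ref{prop:p_star} is invoked; the only points requiring care are that the closed form is obtained by matching the two initial values at $n=0,1$ (so one should check, trivially, that it reproduces them) and that the degenerate case $p=q$ is handled by the limiting interpretation of $(q^n-p^n)/(q-p)$. There is no substantive obstacle beyond correctly quoting the two earlier results.
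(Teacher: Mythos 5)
Your proposal is correct and follows essentially the paper's own route: the closed forms are obtained from the structure of the banded transition matrix in Proposition \ref{prop:p_star} (the two initial ensemble averages plus the second-order recurrence with characteristic roots $p$ and $q$), with the compact expression for $\E_t\pazocal{K}_{t+n}$ following from the restriction \eqref{eq:kMkI_multi}. Your explicit treatment of the degenerate case $p=q$ as the limit of $(q^n-p^n)/(q-p)$ is a welcome detail that the statement leaves implicit.
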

\begin{proof}
See Appendix \ref{sec:proof_cor}.
\end{proof}
This corollary establishes that the impulse response function of a typical DSGE model that belongs to class $\mathcal{A}$ can be constructed as a weighted sum of two scalars: the impact Markov state and its medium run counterpart. For the endogenous state, Corollary \ref{cor:IRF} establishes further that one only needs to know the impact Markov state and the endogenous probability of staying in the ``medium run'' state $q$. Likewise, the first two expressions can be used to decompose the impulse response into its impact and medium run effects additively. 

Following the recommendations in both \cite{Uhlig2010} and \cite{Leeper2010}, empirical papers on government spending now routinely report the cumulative or PDV multiplier of government spending \textemdash see \cite{Ramey2018government} for a recent example. I show in the Appendix that, assuming a discount rate of $\beta<1$, this model statistic can be written as in the following corollary:
\begin{corollary}[Present Discount Value]
\label{cor:PDV}
The Present Discount Value multiplier for variable $i$ after a shock $\epsilon_t=1$ is given by:
\begin{align*}
\mathfrak{M}_i=y_{i,I}+\beta\frac{1-p}{1-\beta q}y_{i,M}.
\end{align*}
\end{corollary}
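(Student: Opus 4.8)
The plan is to read $\mathfrak{M}_i$ straight off the closed-form impulse responses in Corollary~\ref{cor:IRF} by summing two geometric series. By definition (and as spelled out in the online appendix), the present discount value multiplier for variable $i$ is the discounted cumulative response of $\pazocal{Y}_{i,t}$ normalized by the discounted cumulative response of the exogenous forcing process, $\mathfrak{M}_i=\bigl(\sum_{n\ge 0}\beta^n\E_t\pazocal{Y}_{i,t+n}\bigr)\big/\bigl(\sum_{n\ge 0}\beta^n\E_t\pazocal{Z}_{t+n}\bigr)$. By Theorem~\ref{thm:IRF_univariate} together with the structure of $\mathcal{P}_b$, a unit shock gives $\E_t\pazocal{Z}_{t+n}=p^n$, so the denominator is the elementary sum $1/(1-\beta p)$, which converges because $0\le p<1$ and $\beta<1$.

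For the numerator I would substitute $\E_t\pazocal{Y}_{i,t+n}=p^n y_{i,I}+(1-p)\tfrac{q^n-p^n}{q-p}y_{i,M}$ and split along the two Markov states. The impact term contributes $y_{i,I}\sum_{n\ge 0}(\beta p)^n=y_{i,I}/(1-\beta p)$, and the medium-run term contributes $\tfrac{(1-p)y_{i,M}}{q-p}\bigl[\tfrac{1}{1-\beta q}-\tfrac{1}{1-\beta p}\bigr]$. Placing the bracket over a common denominator produces a factor $\beta(q-p)$ that cancels the $q-p$ outside, leaving $\tfrac{\beta(1-p)y_{i,M}}{(1-\beta p)(1-\beta q)}$. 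Hence the numerator equals $\tfrac{1}{1-\beta p}\bigl[y_{i,I}+\beta\tfrac{1-p}{1-\beta q}y_{i,M}\bigr]$, and dividing by $1/(1-\beta p)$ delivers the stated formula.

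This is essentially bookkeeping rather than a genuine obstacle, but two points merit a line of care. First, one should record the convergence conditions $\beta p<1$ and $\beta q<1$, which are immediate from $\beta<1$ and $0\le p,q<1$, so that every series used is legitimate. Second, the intermediate expressions inherit a spurious $q-p$ in the denominator from Corollary~\ref{cor:IRF}; since the end result is manifestly regular at $q=p$, the non-generic case $q=p$ is covered by continuity --- equivalently, working directly with $\lim_{q\to p}\tfrac{q^n-p^n}{q-p}=np^{n-1}$ and $\sum_{n\ge 0}n(\beta p)^n=\beta p/(1-\beta p)^2$ reproduces the same value. The full derivation, along with the precise statement of the multiplier, is what I would relegate to the online appendix as the paper does.
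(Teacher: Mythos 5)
Your derivation is correct and follows the route the paper itself takes: the corollary is obtained from the closed-form impulse responses of Corollary \ref{cor:IRF} by summing the discounted geometric series in $\beta p$ and $\beta q$ and normalizing by the discounted path of the exogenous shock, $\sum_{n\geq 0}(\beta p)^n = 1/(1-\beta p)$, which is exactly how the $1/(1-\beta p)$ factors cancel to leave $y_{i,I}+\beta\tfrac{1-p}{1-\beta q}y_{i,M}$. Your added remarks on convergence ($\beta p<1$, $\beta q<1$) and on the removable singularity at $q=p$ are accurate and harmless refinements rather than departures from the paper's argument.
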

\begin{proof}
See Appendix \ref{sec:proof_cor}.
\end{proof}
In other words, the cumulative/PDV multiplier is just the sum of the short run impact and a second term that is proportional to the medium run effect. Using this formula, it is then straightforward to see whether the potentially negative long run impact outweighs the short run positive impact. If both effects go in the same direction, this formula can also be used to compute the relative contributions of the two effects: for a given PDV multiplier, how much is due to the short run versus medium run effects? Again, one can potentially give a formal answer to this question using Corollary \ref{cor:PDV}. Finally, one can use the latter to make a cleaner comparison between theory and empirical papers as these papers now routinely report such an object that has historically been computed numerically. 

Another model object that is related but somewhat distinct from the PDV is the cumulative sum of a given variable. In some contexts, such a variable is of critical importance. One can think about situations where the cumulative sum of a given variable can be used in order to back out an important variable that is not in the model. In the context of New Keynesian models, these are usually written in terms of inflation with the actual price level left out. Indeed, having the price level brings a unit root to the model and also prevents one from deriving an analytical solution since it adds one endogenous state variable to the model. Given that the expected price level can be constructed using the cumulative sum of inflation rates, the latter is an important object. In this context, I show in the Appendix that the following corollary applies:
\begin{corollary}[Cumulative Sum]
\label{cor:Cumsum}
The cumulative sum for control variable $i$ and the endogenous state after a shock $\epsilon_t=1$ are given by:
\begin{align*}
\mathfrak{C}_{i,y} = \E_t\sum_{n=0}^{\infty}\pazocal{Y}_{i,t+n}&= \frac{1}{1-p}y_{i,I} + \frac{1}{1-q}y_{i,M}\\
\mathfrak{C}_k = \E_t\sum_{n=0}^{\infty}\pazocal{K}_{t+n}
 &=\frac{1}{1-p}k_I + \frac{1}{1-q}k_M= \frac{1}{q(1-q)}k_M
\end{align*}
\end{corollary}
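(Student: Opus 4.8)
The plan is to take the closed-form impulse responses from Corollary~\ref{cor:IRF} as the starting point and simply sum them over the horizon $n$, exploiting the fact that under Assumption~\ref{ass:BK} together with the requirement $0\le q<1$ from Definition~\ref{def:model_M_multi} and the maintained restriction $0\le p<1$, every geometric-type series that appears converges absolutely. Concretely, I would first invoke Corollary~\ref{cor:IRF} to write $\E_t\pazocal{Y}_{i,t+n}=p^n y_{i,I}+(1-p)\tfrac{q^n-p^n}{q-p}y_{i,M}$, and then justify the interchange of $\E_t$ with the infinite sum (each summand is a fixed scalar path, so this is immediate, or one can appeal to absolute summability / dominated convergence). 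This reduces $\mathfrak{C}_{i,y}$ to the two scalar series $\sum_{n\ge 0}p^n$ and $\sum_{n\ge 0}(q^n-p^n)$.

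The second step is the elementary bookkeeping: $\sum_{n\ge 0}p^n=\tfrac{1}{1-p}$ and $\sum_{n\ge 0}(q^n-p^n)=\tfrac{1}{1-q}-\tfrac{1}{1-p}=\tfrac{q-p}{(1-q)(1-p)}$, so the coefficient $\tfrac{(1-p)}{q-p}$ multiplying the second series collapses exactly to $\tfrac{1}{1-q}$, yielding $\mathfrak{C}_{i,y}=\tfrac{1}{1-p}y_{i,I}+\tfrac{1}{1-q}y_{i,M}$. The same computation applied to $\E_t\pazocal{K}_{t+n}=p^n k_I+(1-p)\tfrac{q^n-p^n}{q-p}k_M$ gives $\mathfrak{C}_k=\tfrac{1}{1-p}k_I+\tfrac{1}{1-q}k_M$. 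For the compact form in terms of $k_M$ alone, I would substitute the Markov restriction \eqref{eq:kMkI_multi}, i.e. $k_I=\tfrac{(1-p)}{q}k_M$, which turns the first term into $\tfrac{1}{q}k_M$; combining $\tfrac{1}{q}+\tfrac{1}{1-q}=\tfrac{1}{q(1-q)}$ delivers $\mathfrak{C}_k=\tfrac{k_M}{q(1-q)}$.

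There is no real obstacle here; the proof is a one-line series summation followed by a partial-fraction identity. The only points that merit a sentence of care are (i) the convergence/interchange justification, which rests squarely on $0\le p<1$ and $0\le q<1$, and (ii) the apparent singularity at $q=p$ in $\tfrac{q^n-p^n}{q-p}$: this is removable, with the expression understood in the limiting sense $n p^{n-1}$, and since $\sum_{n\ge 0}n p^{n-1}=\tfrac{1}{(1-p)^2}$ the final formulas extend continuously to that case (both $\tfrac{1}{1-p}$ coefficients then agree), so no separate argument is needed. I would therefore present the corollary as an immediate consequence of Corollary~\ref{cor:IRF} and equation~\eqref{eq:kMkI_multi}.
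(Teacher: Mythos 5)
Your proposal is correct and follows essentially the same route as the paper: sum the closed-form impulse responses of Corollary~\ref{cor:IRF} term by term (each geometric series converging since $0\le p<1$ and $0\le q<1$), then use the restriction $k_M=\frac{q}{1-p}k_I$ to obtain the compact form $\frac{1}{q(1-q)}k_M$. The only caveat, inherited from the statement itself rather than from your argument, is that the final $k_M$-only expression presupposes $q\neq 0$ (when $q=0$ one has $k_M=0$ and the preceding form $\frac{1}{1-p}k_I+\frac{1}{1-q}k_M$ is the one to use); your handling of the removable $q=p$ case is fine and matches the limiting interpretation of Corollary~\ref{cor:IRF}.
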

\begin{proof}
See Appendix \ref{sec:proof_cor}.
\end{proof}
Once again, this model statistic is a linear combination of both the short run impact and the medium run effect and can be readily analyzed using standard tools. Each of these is multiplied by the expected number of time periods for the chain to be in each respective state\textemdash conditional on starting in the first state. For the endogenous state, using the implied relationship between these two one can obtain an expression that only depends on the medium run state $k_M$. In a recent contribution, \cite{Nie2022} leverage this corollary to study analytically a policy of Price Level Targeting.\footnote{Given that considering the price level adds an endogenous state variable, papers considering a policy of PLT virtually all report numerical experiments. Using the formula from Corollary \ref{cor:Cumsum}, \cite{Nie2022} derive analytical formulas for a New Keynesian model with a Central Bank that follows a policy of PLT and study how the latter can rid the model of sunspot liquidity traps.}

Finally, the MUMS allows one to produce an exact, two-dimensional representation of specific equations of the model. Importantly, this representation takes endogenous rational expectations into account. This allows one to generalize a tool that has been used extensively for forward-looking models \textemdash see \cite{Eggertsson2010}, but now allowing for endogenous persistence. This will be especially useful for models (such as the one developed in the next section) where most of the action occurs in the medium run. For the sake of space, this is developed in online Appendix A. To illustrate the usefulness of the framework developed in the current paper, I now study a New Keynesian model in which the Central Bank behaves optimally with commitment.

\section{Application: a New Keynesian model with optimal monetary policy}
\label{sec:application}

I consider a standard New Keynesian (NK) model that features optimal monetary policy with commitment as in \cite{Gali2015monetary}, chapter 5. I refer the reader to this textbook for the primitives of the model and focus here on the log-linear version of the equilibrium conditions. Let $x_t$ denote the output gap and $\pi_t$ the inflation rate at time $t$, both in deviations from steady state. The Central Bank seeks to choose a state-contingent sequence $\left\{x_t,\pi_t\right\}_{t=0}^{\infty}$ to solve the following program:
\begin{align}
\nonumber
\min \E_0\sum_{t=0}^{\infty}&\beta^t\left[\pi_t^2+\vartheta x_t^2\right]   \\
\text{s.t}\quad \pi_t =& \beta\E_t\pi_{t+1}+\kappa x_t + u_t 
\label{eq:NKPC}
\end{align}
where $\beta\in(0,1)$ is the discount factor, $\kappa$ is the elasticity of inflation to marginal costs for a given expected inflation rate next period and $u_t$ follows a standard $AR(1)$ process. I first consider the case where the Central Bank is not able to commit. In that case, the Central Bank takes $\E_t\pi_{t+1}$ as given so that the first order conditions boil down to the following targeting rule:
\begin{align}
x_t=-\frac{\kappa}{\vartheta}\pi_t.
\end{align}
In that case there is no endogenous persistence and the dynamics of this economy can be reproduced by a two-state Markov chain where the probability to stay in the first state is $p$ and where the second state is the steady state. It follows that the set of Markov restrictions can be written as:
\begin{align}
\label{eq:NKPC_discretion}
\pi_I^d &= \beta p \pi_I^d + \kappa x_I^d + u_I\\
x_I^d &= -\frac{\kappa}{\vartheta}\pi_I^d,
\label{eq:targ_rule_discretion}
\end{align}
where the superscript $d$ denotes discretion. This linear system can easily be solved\footnote{In that case the solution for inflation is given by:
\[
\pi_I^d = \frac{u_I}{1-\beta p + \kappa^2/\vartheta}u_I.
\]} for $\pi_I^d$ and $x_I^d$ as a function of $u_I$. It is pretty clear that equation \eqref{eq:NKPC_discretion} describes a positive relation between inflation and the output gap. In contrast, the targeting rule \eqref{eq:targ_rule_discretion} describes a negative relation between those same two variables. Note that the cost-push shock will then shift the Phillips curve along the negatively sloping targeting rule: the output gap and inflation will be negatively correlated conditional on a cost-push shock. Because of this, \cite{Mcleay2020optimal} argue that one cannot estimate the slope of the Phillips curve in this situation because one would obtain a negative value which contrasts with the actual value of $\kappa>0$. In the current context, if one were to consider a model with discretion then knowledge of $\pi_I$ and $x_I$ without knowing the shock $u_I$ would lead to estimate a negative slope of the Phillips curve. As in \cite{Mcleay2020optimal}, it will be useful to consider what happens when the Central Bank is able to commit. In this case, the model will feature endogenous persistence and thus the tools developed before will become relevant.

In the case where the Central Bank \textit{is} able to commit, it will take into account the fact that its actions influence $\E_t\pi_{t+1}$. Formally, letting $\xi_t$ denote the Lagrange multiplier in front of the time $t$ constraint for the maximization program, it is well known that the first order conditions boil down to the following two linear difference equations:
\begin{align}
\label{eq:FOC_x}
x_t &= \frac{\kappa}{\vartheta}\xi_t \\
\xi_t &= \xi_{t-1} -\pi_t
\label{eq:FOC_pi}
\end{align}
with $\xi_{-1}=0$.
Following the method described in section \ref{sec:baseline_model}, equations \eqref{eq:NKPC}, \eqref{eq:FOC_x} and \eqref{eq:FOC_pi} can be rewritten with the following Markov restrictions:
\begin{align}
\label{eq:NKPC_I}
\pi_I &=  \beta\left[p\pi_I+(1-p)\pi_M\right]+\kappa x_I+u_I\\   
\label{eq:NKPC_M}
\pi_M &=  \beta q\pi_M + \kappa x_M\\   
\label{eq:LoM_I}
\xi_I &= -\pi_I\\
\label{eq:LoM_M}
\xi_M &= \frac{1}{1-p}\xi_I-\pi_M\\
\label{eq:AR2}
\xi_M &= \frac{q}{1-p}\xi_I,
\end{align}
and where the first order condition with respect to the output gap now becomes $x_i=\frac{\kappa}{\vartheta}\xi_i$ for $i\in\left\{I,M\right\}$. Equations \eqref{eq:NKPC_I}-\eqref{eq:NKPC_M} are the short and medium run versions of the Phillips curve. I have used the Markov structure to compute expected inflation on both right hand sides. Equations \eqref{eq:LoM_I}-\eqref{eq:LoM_M} pertain to the law of motion for the Lagrange multiplier in the short and medium run. Finally, equation \eqref{eq:AR2} encodes the fact that the Lagrange multiplier follows an $AR(2)$ process conditional on the cost push shock. Note that equation \eqref{eq:NKPC_I} with $\pi_M=0$ alongside $x_I=-\frac{\kappa}{\vartheta}\pi_I$ nests the case of full discretion where the Central Bank cannot commit at all. 

From this set of Markov restrictions, one can observe that the dynamics after an increase in $u_I$ might be different in regime $I$ and $M$. Before doing that, it will be useful to solve for $q$, which will govern how long the medium run dynamics will last on average. Combining the medium run Phillips curve \eqref{eq:NKPC_M} with equations \eqref{eq:LoM_M} and \eqref{eq:AR2}, one ends up with a simple quadratic equation in $q$. Moreover, it can be shown that there is only one solution within the unit circle which is given by:
\begin{align*}
q = \frac{2}{\Psi + \sqrt{\Psi^2-4\beta}} ,   
\end{align*}
where $\Psi\equiv 1+\beta + \frac{\kappa^2}{\vartheta}$. As a result, $q$ is clearly decreasing in $\kappa$ and increasing in $\vartheta$ for a given degree of price flexibility.\footnote{If $\vartheta$ is the socially optimal weight, then $\vartheta = \kappa/\theta$, where $\theta$ is the elasticity of substitution across goods. In that case, $q$ is still decreasing with respect to $\kappa$, but is now increasing with respect to $\theta$.} Intuitively, the higher $\kappa$ the more flexible the prices, in which case it is easier for the Central Bank to control inflation. As a result, the Central Bank does not need to keep inflation away from steady state for a long period of time. In the case of stickier prices, it is optimal for the Central Bank to make inflation deviate slightly but keep this deviation for a longer time. 

With this in mind, I will focus first on the short run dynamics. From equation \eqref{eq:NKPC_I}, it is evident that the action comes from the fact that the Phillips curve will shift after an increase in the cost-push shock $u_I$. In addition, expected inflation in the medium run will modify how actual inflation depends on a given variation in output or the cost-push shock. To see this more clearly, notice that one can use equations  \eqref{eq:NKPC_M}, \eqref{eq:LoM_I}, \eqref{eq:AR2} alongside the first order condition with respect to the output gap to write:
\begin{align}
\pi_M = -\frac{\frac{\kappa^2}{\vartheta}}{1-\beta q}\frac{q}{1-p}\pi_I = -\frac{1-q}{1-p}\pi_I, 
\label{eq:piM_piI}
\end{align}
which encodes the promise of lower inflation after the initial increase in the short run. The second equality uses the quadratic polynomial that $q$ has to solve. In addition, $q$ is independent of $p$ and as a result one can see that the more persistent the cost-push shock, the more medium run deflation is promised after a short run increase in inflation. Using this to substitute for $\pi_M$ in the short run Phillips curve \eqref{eq:NKPC_I} and re-arranging, I get:
\begin{align*}
\pi_I &= \beta\overbracket{(p+q-1)\pi_I}^{\E_I \pi_{t+1}} + \kappa x_I+u_I\\
&=    \frac{1}{1-\beta p + \beta (1-q)}\left(\kappa x_I+u_I\right),
\end{align*}
so that the presence of commitment effectively makes the Phillips curve flatter in the short run compared to the case with discretion that obtains under $q=1$. In addition, one can see that the effect of the cost-push shock on expected inflation in the short run is ambiguous: if $p>1-q$, then it is positive. If however the shock is transient then expected deflation in the medium run more than compensates inflation in the short run. 

\begin{figure}
\centering
\caption{Short run equilibrium}
\includegraphics[width=.7\textwidth]{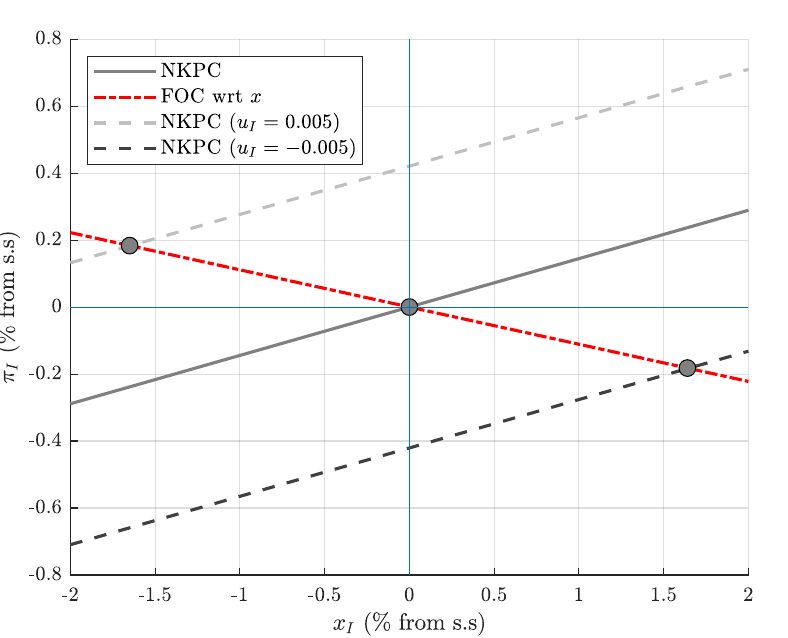}
\vspace{0.1cm}
\begin{minipage}{0.7\textwidth}
\protect\footnotesize Notes : \protect\scriptsize The calibration closely follows \cite{Gali2015monetary}, chapter 3, pp. 67-68. This calibration implies $\kappa = 0.172$. I assume that the Central Bank chooses the optimal weight of $\vartheta=\kappa/\epsilon=0.019$. The shock has persistence $p=0.5$ and magnitude $u_I=0.005$. 
\end{minipage}
\label{fig:Graph_I}
\end{figure}

Combined with equation \eqref{eq:LoM_I} as well as the first order condition with respect to the output gap, one ends up with a set of two equations in $\pi_I,x_I$. These can be plotted graphically and this is done in Figure \ref{fig:Graph_I}. One take-away from that figure is that the economy behaves in exactly the same way in the short run compared to the case of discretion. The only feature that will change is the slope of the Phillips curve plotted in the figure, which will be larger under discretion. With commitment however, the story does not end there and the economy potentially behaves differently in the medium run before returning to steady state. In that situation, the economy can still be represented with a Phillips curve and the first order condition with respect to the output gap. The Phillips curve now takes the following form:
\begin{align}
\pi_M = \frac{\kappa}{1-\beta q}x_M.    
\label{eq:NKPC_M_estim}
\end{align}
Note that the Phillips curve is not shifting anymore in the medium run. In contrast, the targeting rule will shift instead. This is due to the promise made by the Central Bank in the short run. More specifically, the targeting rule can now be written as:
\begin{align}
\pi_M = -\frac{\vartheta}{\kappa}x_M - \frac{\pi_I}{1-p}.    
\label{eq:shift_M}
\end{align}
This explains the finding from before: the more persistent the cost push shock, the larger the shift in the targeting rule and thus the lower inflation has to be in the medium run. These dynamics are illustrated in Figure \ref{fig:Graph_M}, which uses the same calibration as Figure \ref{fig:Graph_I}.
\begin{figure}
\centering
\caption{Medium run equilibrium}
\includegraphics[width=.7\textwidth]{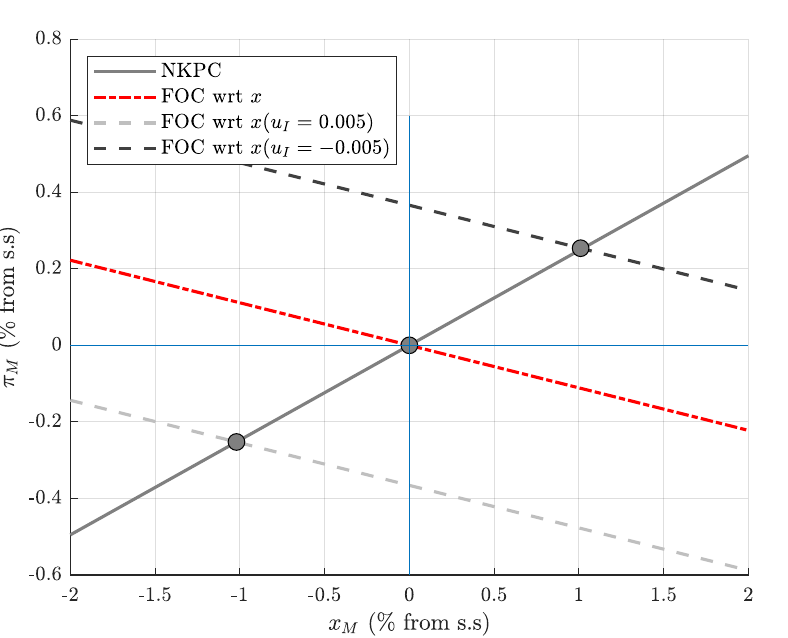}
\label{fig:Graph_M}
\end{figure}
In sharp contrast with short term dynamics, conditional on the cost push shock one now observes a \textit{positive} correlation between inflation and the output gap. Notice that equation \eqref{eq:shift_M} is very similar to equation (11) in \cite{Mcleay2020optimal}. In their case, it arises under discretion and when implementation errors inject a shock in the targeting rule. In the current model, the shift in the targeting rule is endogenous and depends on short run inflation. Therefore, if one has access to $\pi_M$ and $x_M$ then one can readily compute the slope in Figure \ref{fig:Graph_M} using the medium run Phillips curve \eqref{eq:NKPC_M_estim}. 

Given the negative relation between $\pi_M$ and $\pi_I$, the impulse response will be such that it starts out positive and then turns out negative. Indeed, using the results from Section \ref{sec:analytical_tools}, the impulse response at horizon $n$ can be written as:
\begin{align*}
\E_t \Pi_{t+n} = p^n\pi_I -(1-q)\frac{q^n-p^n}{q-p}\pi_I,   
\end{align*}
where I have used the fact that $\pi_M=-(1-q)\pi_I/(1-p)$. The first term on the right hand side initially dominates and is then more than compensated over by the second term. This is further illustrated in Figure \ref{fig:visu_pi}, where I plot the average realization of the Markov chain for inflation alongside selected realizations.
\begin{figure}
\centering
\caption{Impulse response of inflation}
\includegraphics[width=.8\textwidth]{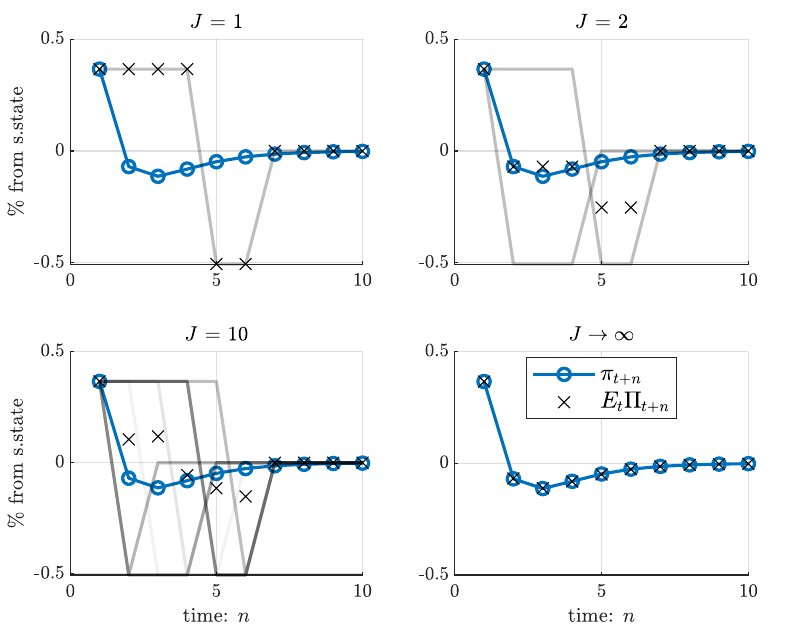}
\label{fig:visu_pi}
\end{figure}
Finally, one can easily compute the expected cumulative sum of inflation to obtain that:
\begin{align*}
\E_t\sum_{n=0}^{\infty}\Pi_{t+n}=\lim_{n\to\infty}\E_t p_{t+n}-p_{-1} &= \frac{1}{1-p}\pi_{I}+\frac{1}{1-q}\pi_{M}\\
&= \frac{1}{1-p}\pi_{I} - \frac{1}{1-q}\frac{1-q}{1-p}\pi_{I}=0,
\end{align*}
which recovers the well-known result that the price level $p_t$ returns to its initial level under full commitment after a cost-push shock. In contrast, the cumulative sum of the output gap is clearly negative given that both $x_I$ and $x_M$ are strictly negative. In some sense, the short term negative relation between output gap and inflation in the short run is perfectly compensated by the positive relation in the medium run.

\subsection{Discussion}

I have chosen this model as the main application given its simplicity, which permits a clear exposition of the method described in the previous sections. However, most linear DSGE models will most likely be more complicated compared to this model. In that case, the analytical expressions will be more involved and thus potentially less useful in aiding the interpretation of the model. However, the tools described in this paper can also be fruitfully used in that case.

To understand how, note that in light of the literature on tractable DSGE models at the lower bound it is now common to solve for one period ahead conditional expectations as a function of current realizations \textemdash see \cite{Eggertsson2010} for an early example. There are no available tools to do that if the model features endogenous persistence however. One alternative is to consider the representation developed in \cite{Sims2001}. For the model considered in the Introduction with a shock $z_t$ that follows an $AR(1)$ with persistence $p$, the variable $y_t$ would have the following law of motion: $y_t = ay_{t-1} + bz_t$, where the parameters $a$ and $b$ need to be solved for. In case $a=0$, then one can write that $\E_t y_{t+1} = py_t$: expectations are endogenous. If $a\neq 0$ however, then the conditional expectation will be written as $\E_ty_{t+1} = a y_t + bpz_t$. One can immediately see that the second expression does not nest the first one in case $a=0$: in that situation, expectations are viewed as purely exogenous. The methods developed in this paper allow one to construct a conditional expectation that \textit{does} nest the simple one.

Returning to the issue at hand, if the model features too many equations so that the analytical solution is not tractable then one can use the tools developed in this paper to represent the model graphically. Importantly, it does so in a manner that does nest the simple case where expectations are viewed as endogenous. To illustrate that, I consider in Online Appendix A a Real Business Cycle (RBC) model with capital adjustment costs and sticky prices. The model nests both the standard RBC and NK models as special cases. I use this model to revisit the early debates surrounding the adoption of the standard NK framework described in the seminal paper of \cite{Gali1999technology}. Following that paper, I focus on the labor market and represent the equilibrium of the model in a labor supply/demand graph and study how a technology shock moves these two curves in different variants of the model. 

The main result from \cite{Gali1999technology} is that after a positive technology shock, hours worked persistently decline. The author then shows that this is inconsistent with an RBC model, but consistent with a simple NK model. In Online Appendix A, I study a version of the NK-RBC model with sticky prices and an arbitrary large capital adjustment cost that nests the standard NK model. In that situation, a positive technology shock makes household supply less labor for a given real wage.\footnote{Given log utility with respect to consumption and a separable, convex disutility associated with hours worked these two are substitutes. For a given real wage, the technology shock makes consumption and thus the marginal rate of substitution increase. In the face of a constant real wage, hours have to decrease.} In contrast, the increase in technology makes firms demand more labor for a given wage. I show that the shift in labor supply dominates so that hours decrease as a result. The intuition behind the result is quite different from the one exposed in \cite{Gali2013notes}, which emphasizes shifts in labor demand in the NK model. 

In order to consider the RBC model, I make the capital adjustment as well as price adjustment costs vanish. In this situation, in line with \cite{Gali1999technology} I find that hours now increase after a technology shock. As before, labor supply decreases while labor demand increases for a given real wage. The absence of accomodative monetary policy after an increase in technology explains why labor supply shifts by less. In addition, the presence of capital enables the household to smooth consumption so that it reacts less in the short run. Furthermore, the addition of capital magnifies the effect on the marginal product of labor and thus magnifies the shift in labor demand. As a result, hours now increase in equilibrium. Importantly however, the presence of investment makes consumption in the short run less sensitive to variations in hours worked so that the slope of the labor supply curves is lower in magnitude. This also helps to explain why hours increase in equilibrium. This effect cannot be captured in standard methods, which would attribute all the effects to shifts without modifying the slopes.

The exercise also highlights that the finding of a decrease in hours in the simple New Keynesian model is rather fragile: the addition of private capital with moderate adjustment costs moves the model closer to the standard RBC. For the calibration that I use, I find that for an elasticity of private investment with respect to Tobin's $q$ that is higher than 3, hours decrease after a technology shock. That is much lower than the value of 10 recently used in \cite{Bilbiie2019capital} or 5.4 in \cite{Bernstein2022simple}. As a result, hours do decrease in the NK model with private capital only if capital adjustment costs are quite large.

Finally, given that the method developed in this paper draws a clear distinction between dynamics in the short and the medium run, I can study the two separately. Using these, I find that the dynamics in the NK-RBC model in the medium run are qualitatively similar to the simple NK model in the short run. In the NK-RBC model, the medium run shifts come from the increase in investment in the short run that feeds into a higher stock of capital. The higher stock of capital increases the marginal product of labor and also permits a higher consumption level. As a result, hours supplied decrease and hours demanded increase. In that case, the decrease in hours supplied dominate so that the real wage increases and hours decrease in the medium run, just like in the standard NK model in the short run. As a result, the effects of an increase in technology in the standard NK model mimic the ones coming from an increase in capital in the medium run in the RBC model.

\section{Conclusion}
\label{sec:conclusion}

In this paper, I have shown that a class of linear DSGE models with one endogenous/exogenous state variables can be equivalently represented by an absorbing 3-state Markov chain. I have used this representation to develop analytical expressions for model objects that have routinely been computed numerically: IRF, PDV multiplier and cumulative sum.

The main limitation of the approach is that it only applies to \textit{linear} DSGE models. As a result, these tools cannot yet be used to study what happens at the Effective Lower Bound or with Downward Nominal Wage Rigidity. Given that the slopes associated with various first order conditions in the simple forward looking NK model have been shown to be crucial to determine policy multipliers, it would be interesting to study how the presence of endogenous persistence modifies these slopes. I leave this avenue for future research.  

\bibliographystyle{apalike2}
\bibliography{MUMS}

\appendix

\section{Proof of Proposition \ref{prop:p_star}}
\label{sec:proof_prop_p_star}

Let $p_{I,t}$ denote the unconditional probability that the Markov chain $\pazocal{K}_t$ is in state `I' at time $t$ and likewise $p_{M,t}$ for state `M'. I do not describe the corresponding probability for state `L' since $y_L=0$. At any point in time $t$, the conditional expectation for the Markov chain $\pazocal{Y}_{t+n}$ is then given by:
\begin{align*}
\E_t  \pazocal{Y}_{t+n} &=  p_{I,t+n}y_I+p_{M,t+n}y_M,  
\end{align*}
for $n\geq 0$, which is basically the law of total probability. Given the transition matrix 
\begin{align*}
\mathcal{P}_b=
\begin{bmatrix}
p & 1-p & 0\\
0 & q & 1-q\\
0 & 0 & 1
\end{bmatrix},
\end{align*}
one can work out the dynamics of the unconditional probabilities $p_{I,t+n}$ and $p_{M,t+n}$. In particular, note that to be in state `I' at time $t+n$, the Markov chain had to be either in state `I' or `M' at time $t+n-1$ for $n\geq 1$. It cannot have been at state `L' since this one is absorbing by construction. Given the transition matrix $\mathcal{P}_b$, the following set of recursions must hold:
\begin{align*}
p_{I,t+n} &= pp_{I,t+n-1}\\    
p_{M,t+n} &= (1-p)p_{I,t+n-1}+qp_{M,t+n-1}
\end{align*}
for $n\geq 1$ and where $p_{I,t}, p_{M,t}$ are given at time $n=0$. The transition matrix for transient states associated with matrix $\mathcal{P}_b$ is:
\begin{align*}
\mathcal{Q} =
\begin{bmatrix}
p & 1-p\\
0 & q
\end{bmatrix}.
\end{align*}
With this in mind, the recursions can be written in matrix form as follows:
\begin{align}
\label{eq:recursion_Q}
\begin{bmatrix}
p_{I,t+n} \\
p_{M,t+n} 
\end{bmatrix}
&=
\mathcal{Q}^\top
\begin{bmatrix}
p_{I,t+n-1} \\
p_{M,t+n-1} 
\end{bmatrix}
\end{align}
Using the definition of the conditional expectation as well as the recursion for the probabilities, one can write for $n\geq 1$:
\begin{align}
\E_t \pazocal{Y}_{t+n} &= \left[y_I\ y_M\right]   
\begin{bmatrix}
p_{I,t+n} \\
p_{M,t+n} 
\end{bmatrix}= \left[y_I\ y_M\right]\mathcal{Q}^\top
\begin{bmatrix}
p_{I,t+n-1} \\
p_{M,t+n-1} 
\end{bmatrix}
\label{eq:Kt_ptm1}
\end{align}
Now using the Cayley-Hamilton theorem and given that $\mathcal{Q}^\top$ is invertible, one can write:
\begin{align*}
\mathcal{Q}^\top = \Tr\left(\mathcal{Q}^\top\right)I_2  - \det \left(\mathcal{Q}^\top\right) \left(\mathcal{Q}^\top\right)^{-1}
\end{align*}
Using this to substitute for $\mathcal{Q}^\top$ in equation \eqref{eq:Kt_ptm1}, one obtains:
\begin{align}
\nonumber
\langle  \pazocal{Y}_{t+n}\rangle &=  \left[y_I\ y_M\right] \left(  \Tr\left(\mathcal{Q}^\top\right)I_2  - \det \left(\mathcal{Q}^\top\right) \left(\mathcal{Q}^\top\right)^{-1}\right)
\begin{bmatrix}
p_{I,t+n-1} \\
p_{M,t+n-1} 
\end{bmatrix}\\
\nonumber
&= \Tr\left(\mathcal{Q}^\top\right)\left[y_I\ y_M\right]\begin{bmatrix}
p_{I,t+n-1} \\
p_{M,t+n-1} 
\end{bmatrix} - \det \left(\mathcal{Q}^\top\right)\left[y_I\ y_M\right]\left(\mathcal{Q}^\top\right)^{-1}\begin{bmatrix}
p_{I,t+n-1} \\
p_{M,t+n-1} 
\end{bmatrix}\\
&= \Tr\left(\mathcal{Q}^\top\right)\E_t  \pazocal{Y}_{t+n-1} - \det \left(\mathcal{Q}^\top\right)\left[y_I\ y_M\right]\left(\mathcal{Q}^\top\right)^{-1}\begin{bmatrix}
p_{I,t+n-1} \\
p_{M,t+n-1} 
\end{bmatrix}.
\label{eq:matrix_recursion2}
\end{align}
Now notice that equation \eqref{eq:recursion_Q} can be re-written as:
\begin{align*}
\left(\mathcal{Q}^\top\right)^{-1}
\begin{bmatrix}
p_{I,t+n} \\
p_{M,t+n} 
\end{bmatrix}
&=
\begin{bmatrix}
p_{I,t+n-1} \\
p_{M,t+n-1} 
\end{bmatrix}
\end{align*}
for $n\geq 1$. Using this to simplify the second term on the right hand side of equation \eqref{eq:matrix_recursion2}, one obtains:
\begin{align*}
\E_t  \pazocal{Y}_{t+n} &=    \Tr\left(\mathcal{Q}^\top\right)\E_t  \pazocal{Y}_{t+n-1} - \det \left(\mathcal{Q}^\top\right)\left[y_I\ y_M\right]\begin{bmatrix}
p_{I,t+n-2} \\
p_{M,t+n-2} 
\end{bmatrix} \\
&= \Tr\left(\mathcal{Q}^\top\right)\E_t \pazocal{Y}_{t+n-1} - \det \left(\mathcal{Q}^\top\right)\E_t  \pazocal{Y}_{t+n-2}
\end{align*}
for $n\geq 2$. 
As a result, one obtains:
\begin{align*}
\E_t  \pazocal{Y}_{t+n} &= \left(p+\eta_{kk}\right)\E_t\pazocal{Y}_{t+n-1} -p\eta_{kk}\E_t\pazocal{Y}_{t+n-2}   ,
\end{align*}
for $n\geq 2$, which is the same recursion as the one obtained for $y_{t+n}$ in the main text. 
Given the initial distribution, the Markov chain has to start at state 1 so that $\E_t  \pazocal{Y}_{t}= y_I $ for $n=0$. It follows that
\[
\E_t \pazocal{Y}_{t+1} = p\cdot y_I + p_{12}\cdot y_M
\]
for $n=1$.

\section{Proof of Theorem \ref{thm:IRF_univariate}}
\label{sec:Proof_thm_univariate}

Let us begin with the auto-regressive model, which I reproduce here for convenience:
\begin{align}
\label{eq:fwd_multi_I1_app}
\mathbf{A}_0\mathbb{Y}_t &= \mathbf{A}_1\E_t \mathbb{Y}_{t+1}+B_0 k_t+C_0z_t\\
\label{eq:bwd_multi_I1_app}
k_t &= \rho k_{t-1} + D_0 \mathbb{Y}_t+ez_t\\
z_t &= pz_{t-1}+\epsilon_t.
\label{eq:exo_state_multi_I1_app}
\end{align}
Let me guess that the solution can be written as
\begin{align*}
\mathbb{Y}_t &= M_k k_{t-1} + M_z z_t\\
k_t &= \eta_k k_{t-1} + \eta_z z_t.
\end{align*}
Plugging these guesses into equation \eqref{eq:fwd_multi_I1_app}, I get:
\begin{align*}
\mathbf{A}_0\left[M_k k_{t-1} + M_z z_t\right] &= \mathbf{A}_1\E_t\left[M_k k_{t} + M_z z_{t+1}\right]+B_0\left[\eta_k k_{t-1} + \eta_z z_t\right]+C_0z_t\\
&= \left[\eta_k\mathbf{A}M_k+B_0\eta_k\right]k_{t-1}+\left[\eta_z\mathbf{A}M_k+p\mathbf{A}M_z+B_0\eta_z+C_0\right]z_t.
\end{align*}
Now plugging these guesses into equation \eqref{eq:bwd_multi_I1_app}, I obtain:
\begin{align*}
\eta_k k_{t-1} + \eta_z z_t &= \rho k_{t-1} + D_0 \left[M_k k_{t-1} + M_z z_t\right]+ez_t  \\
&= \left[\rho+D_0M_k\right]k_{t-1} + \left[D_0M_z+e\right]z_t.
\end{align*}
By identification, I get that $\eta_k$ and $M_k$ solve the following system of equations:
\begin{align}
\label{eq:UC_1}
\mathbf{A}_0M_k &= \eta_k\mathbf{A}M_k+B_0\eta_k\\
\eta_k &= \rho+D_0M_k
\label{eq:UC_2}
\end{align}
Note that equation \eqref{eq:UC_1} can be rewritten as
\begin{align*}
M_k = \left(\mathbf{A}_0-\eta_k\mathbf{A}_1\right)^{-1}B_0\eta_k.    
\end{align*}
Plugging that back into equation \eqref{eq:UC_2} I finally get the following implicit equation for $\eta_k$:
\begin{align*}
\eta_k = \rho + \eta_kD_0 \left(\mathbf{A}_0-\eta_k\mathbf{A}_1\right)^{-1}B_0.   
\end{align*}

The goal now is to show that parameter $q$ from model $\mathcal{M}$ is a solution to the same equation. A solution of model $\mathcal{M}$ is a vector $\left[q\ k_I\ k_M\ y_{1,I}\dots\ y_{N,I}\ y_{1,M}\dots\ y_{N,M}\right]^\top$ of size $2N+3$ that solves the following system of non-linear equations:
\begin{align}
\label{eq:fwd_I_multi_app}
\mathbf{A}_0\pazocal{Y}_I &= \mathbf{A}p\pazocal{Y}_I+(1-p)\mathbf{A}_1\pazocal{Y}_M + B_0k_I+C_0\\
\label{eq:fwd_M_multi_app}
\mathbf{A}_0\pazocal{Y}_M &= \mathbf{A}q\pazocal{Y}_M +B_0k_M\\
\label{eq:bwd_I_multi_app}
k_I &= D_0\pazocal{Y}_I+e\\
\label{eq:bwd_M_multi_app}
k_M &= \rho \frac{k_I}{1-p} + D_0\pazocal{Y}_M\\
\label{eq:kMkI_multi_app}
k_M &= \frac{qk_I}{1-p}
\end{align}
Using equation \eqref{eq:fwd_M_multi_app}, one can write
$\pazocal{Y}_M = \left(\mathbf{A}_0-q\mathbf{A}_1\right)^{-1}B_0k_M$.
Combining equations \eqref{eq:bwd_M_multi_app}-\eqref{eq:kMkI_multi_app}, one obtains:
\begin{align*}
qk_M &= \rho k_M + qD_0\pazocal{Y}_M = \rho k_M +qD_0\left(\mathbf{A}_0-q\mathbf{A}_1\right)^{-1}B_0k_M
\end{align*}
Assuming that $k_M\neq 0$ and then dividing both sides of this equation by $k_M$, one obtains the same implicit equation as before. It follows that one can choose the unique value of $q$ using the procedure outlined in \cite{Mccallum1983non}. What is now left to show is the the Markov chains actually solve equations \eqref{eq:fwd_multi_I1_app}-\eqref{eq:bwd_multi_I1_app}. Note that equations \eqref{eq:fwd_I_multi_app}-\eqref{eq:fwd_M_multi_app} imply that
\be 
\mathbf{A}_0\E_t\pazocal{Y}_{i,t+n} = \mathbf{A}_1\E_{t+n}\pazocal{Y}_{i,t+n+1} + B_0\E_t\pazocal{K}_{t+n}+C_0\E_t\pazocal{Z}_{t+n}
\label{eq:induction_forward}
\ee
for $n=0,1$. To show that it holds for $n\geq 2$, I will proceed by induction. Accordingly, assume that equation \eqref{eq:induction_forward} holds for the time periods $t+n$ and $t+n+1$ for $n\geq 2$. What is left is to show that this implies that \eqref{eq:induction_forward} holds for time period $t+n+2$. Using the recursive representation from Proposition \ref{prop:p_star}, one can write;
\begin{align*}
\mathbf{A}_0\E_t \pazocal{Y}_{i,t+n} = (p+q)\mathbf{A}_0\E_t\pazocal{Y}_{i,t+n-1}-pq\mathbf{A}_0\E_t\pazocal{Y}_{i,t+n-2}    
\end{align*}
for $n\geq 2$ and likewise for the Markov chains governing the endogenous and exogenous states. Therefore, one can write:
\begin{align*}
\mathbf{A}_0\E_t \pazocal{Y}_{i,t+n+2} &=(p+q)\mathbf{A}_0\E_t\pazocal{Y}_{i,t+n+1}-pq\mathbf{A}_0\E_t\pazocal{Y}_{i,t+n}\\
&=(p+q)\mathbf{A}_1\E_t\pazocal{Y}_{i,t+n+2}-pq\E_t\mathbf{A}_1\pazocal{Y}_{i,t+n+1}   \\
&+ (p+q)B_0\E_t\pazocal{K}_{i,t+n+1}-pqB\E_t\pazocal{K}_{i,t+n}\\
&+ (p+q)C_0\E_t\pazocal{Z}_{i,t+n+1}-pqC\E_t\pazocal{Z}_{i,t+n}\\
&= \mathbf{A}_1\E_t\pazocal{Y}_{i,t+n+3}+B_0\E_t\pazocal{K}_{i,t+n+2}+C_0\E_t\pazocal{Z}_{i,t+n+2}
\end{align*}
As a result, the induction hypothesis implies that the forward equation holds for time period $t+n+2$ as well. Hence proven. A similar logic applies for the backward equation as well as the law of motion for the exogenous state.

\section{Proofs of Corollaries in Section \ref{sec:analytical_tools}}
\label{sec:proof_cor}

\noindent \textbf{Proof of Corollary \ref{cor:IRF}}.
Let me start from the transition matrix $\mathcal{P}_b$. This matrix can be written in canonical form as follows:
\[
\mathcal{P}_b=
\left(\begin{array}{@{}c|c@{}}
\mathcal{Q}
  & R \\
\hline
  0_{1\times 2} &
\mathbf{I}\end{array}\right)
\]
where $\mathcal{Q} \in \mathbb{R}^{2\times 2}$, $R \in \mathbb{R}^{2\times 1}$ and $\mathbf{I}$ is a unit scalar matrix. I focus the case where $p\neq q$. The special case $p=q$ can be handled in a similar way. I start by showing by induction that the $n-$th power of the transition matrix can be written as
\[
(\mathcal{P}_b)^n=
\left(\begin{array}{@{}c|c@{}}
\mathcal{Q}^n
  & * \\
\hline
  0_{1\times 2} &
\mathbf{I}\end{array}\right)
\]
where the $*$ block is left unspecified. For $n=1$, this is trivial. Assume now that this relationship is true for all $n\geq 1$. Then, we can express $(\mathcal{P}^*)^{n+1}$ as
\begin{align*}
(\mathcal{P}_b)^{n+1}&=
\left(\begin{array}{@{}c|c@{}}
\mathcal{Q}
  & R \\
\hline
  0_{1\times 2} &
\mathbf{I}\end{array}\right)
\left(\begin{array}{@{}c|c@{}}
\mathcal{Q}^n
  & * \\
\hline
 0_{1\times 2} &
\mathbf{I}\end{array}\right)=
\left(\begin{array}{@{}c|c@{}}
\mathcal{Q}^{n+1}
  & * \\
\hline
  0_{1\times 2} &
\mathbf{I}\end{array}\right)
 \end{align*}
which proves the result for all $n\geq 1$. I will now use induction again to show that the $n-$th power of $\mathcal{Q}$ can be expressed as:
\[
\mathcal{Q}^{n} = 
\begin{pmatrix}
p^n & (1-p)\frac{q^n-p^n}{q-p} \\
0 & q^n \\
\end{pmatrix}
\]
For $n=1$ this is trivial. Assume now that this relationship is true for all $n\geq 1$. Then, we can express $\mathcal{Q}^{n+1}$ as:
\begin{align*}
\mathcal{Q}^{n+1}&=
\mathcal{Q}^{n}\mathcal{Q}=
\begin{pmatrix}
p^n & (1-p)\frac{q^n-p^n}{q-p} \\
0 & q^n \\
\end{pmatrix}
\begin{pmatrix}
p & 1-p \\
0 & q
\end{pmatrix}
 \end{align*}
It is clear that the diagonal elements are given by $p^{n+1}$ and $q^{n+1}$ respectively. For the upper-right term, we have:
\begin{align*}
p^n(1-p)+(1-p)q\frac{q^n-p^n}{q-p}&=(1-p)\left[p^n+q\frac{q^n-p^n}{q-p}\right]\\
&= (1-p)\left[p^n\left(1-\frac{q}{q-p}\right)+\frac{q^{n+1}}{q-p}\right]\\
&= (1-p)\left[p^n\frac{q-p-q}{q-p}+\frac{q^{n+1}}{q-p}\right]\\
&= (1-p)\frac{q^{n+1}-p^{n+1}}{q-p},
\end{align*}
which proves the desired result. For the expression regarding the endogenous state/backward-looking variable, note that:
\begin{align*}
\E_t\pazocal{K}_{t+n} &= p^nk_I + (1-p)\frac{q^n-p^n}{q-p}k_M\\
&= p^nk_I + q\frac{q^{n}-p^n}{q-p}k_I\\
&= p^nk_I + \frac{q^{n+1}-qp^n}{q-p}k_I\\
&= \left[p^n+\frac{q^{n+1}-qp^n}{q-p}\right]k_I\\
&= \left[\frac{p^n(q-p)}{q-p}+\frac{q^{n+1}-qp^n}{q-p}\right]k_I\\
&= \frac{q^{n+1}-p^{n+1}}{q-p}k_I,
\end{align*}
where I have used the fact that $k_M=qk_I/(1-p)$ on the second line. 

\noindent \textbf{Proof of Corollary \ref{cor:PDV}}.
Using the expression for the impulse response of $\pazocal{Y}_{i,t}$ from Corollary \ref{cor:IRF}, the expected cumulative discounted sum starting from time $t$ is defined as
\begin{align*}
\E_t\sum_{n=0}^{\infty}\beta^n \pazocal{Y}_{i,t+n} &=
 \sum_{n=0}^{\infty}\beta^n \E_t\pazocal{Y}_{i,t+n}\\
&=\sum_{n=0}^{\infty}\left\{\beta^n p^n y_{i,I}+\beta^n(1-p)\frac{q^n-p^n}{q-p}y_{i,M}\right\}\\
						&= \sum_{n=0}^{\infty}(\beta p)^ny_{i,I}+\frac{1-p}{q-p}y_{i,M}\sum_{n=0}^{\infty}(\beta)^n(q^n-p^n)\\
						&= \frac{y_{i,I}}{1-\beta p }+\frac{1-p}{q-p}y_{i,M}\left(\frac{1}{1-\beta q}-\frac{1}{1-\beta p}\right)\\
						&= \frac{y_{i,I}}{1-\beta p }+\beta\frac{1-p}{(1-\beta q)(1-\beta p)}y_{i,M}
\end{align*}
Therefore, the present discount value multiplier for variable $i$ can be written as
\[
\E_t\sum_{n=0}^{\infty}\beta^n \pazocal{Z}_{t+n} = \frac{1}{1-\beta p }\quad\Rightarrow\quad \mathfrak{M}_i = \frac{\frac{y_{i,I}}{1-\beta p }+\beta\frac{1-p}{(1-\beta q)(1-\beta p)}y_{i,M}}{\frac{1}{1-\beta p }}.
\]
and then multiplying both numerator and denominator by $(1-\beta q)(1-\beta p)$.

\noindent \textbf{Proof of Corollary \ref{cor:Cumsum}}.
Using the expression for the impulse response of $\pazocal{K}_t$ from Corollary \ref{cor:IRF}, the conditional expected cumulative sum can be expressed as: \begin{align*}
\E_t\sum_{n=0}^{\infty}\pazocal{K}_{t+n} &= \E_t\sum_{n=0}^{\infty} \bigg [p^n k_I+(1-p)\frac{q^n-p^n}{q-p} k_M \bigg] \\
                                &= \frac{1}{1-p} k_I+\frac{1-p}{q-p} \sum_{n=0}^{\infty} \bigg ( q^n-p^n\bigg)k_M \\
                                &= \frac{1}{1-p} k_I+\frac{1-p}{q-p}  \bigg ( \frac{1}{1-q}-\frac{1}{1-p}\bigg)k_M\\
                                &= \frac{1}{1-p}k_I + \frac{1}{1-q}k_M.
\end{align*}
A similar expression can be obtained for the forward-looking variable. Using the fact that 
\[
k_M = \frac{qk_I}{1-p}
\]
and re-arranging, one obtains the expression in the main text.

\end{document}